\pdfoutput=1
\documentclass[10pt,pra,aps,superscriptaddress,nofootinbib,twocolumn,longbibliography]{revtex4-1}
\usepackage{amsmath,bbm}
\usepackage{amsthm}
\usepackage{amsmath}
\usepackage{latexsym}
\usepackage{amssymb}
\usepackage{float}
\usepackage{quantikz}
\usepackage{tikz}
\usepackage{graphicx}           
\usepackage{color}
\usepackage{xcolor}
\usepackage{mathpazo}
\usepackage{comment}
\usepackage{enumitem}
\usepackage{multirow}
\usepackage{setspace}
\usepackage{subcaption}
\usepackage[colorlinks=true,linkcolor=blue,citecolor=magenta,urlcolor=blue]{hyperref}
\usepackage{svg}

\newcommand{\be}{\begin{equation}}
\newcommand{\ee}{\end{equation}}
\newcommand{\bea}{\begin{eqnarray}}
\newcommand{\eea}{\end{eqnarray}}
\def\squareforqed{\hbox{\rlap{$\sqcap$}$\sqcup$}}
\def\qed{\ifmmode\squareforqed\else{\unskip\nobreak\hfil
\penalty50\hskip1em\null\nobreak\hfil\squareforqed
\parfillskip=0pt\finalhyphendemerits=0\endgraf}\fi}
\def\endenv{\ifmmode\;\else{\unskip\nobreak\hfil
\penalty50\hskip1em\null\nobreak\hfil\;
\parfillskip=0pt\finalhyphendemerits=0\endgraf}\fi}

\makeatletter
\newtheorem*{rep@theorem}{\rep@title}
\newcommand{\newreptheorem}[2]{%
\newenvironment{rep#1}[1]{%
 \def\rep@title{#2 \ref{##1}}%
 \begin{rep@theorem}}%
 {\end{rep@theorem}}}
\makeatother

\newtheorem{thm}{Theorem}
\newreptheorem{thm}{Theorem}
\newtheorem{lemma}{Lemma}
\newtheorem{definition}{Definition}

\newtheorem{obs}{Observation}

\newtheorem{coro}{Corollary}

\newtheorem{example}{Example}

\begin{document}

\title{Antidistinguishability of states in General Probabilistic Theories}

\author{Satyaki Manna}
\email{mannasatyaki@gmail.com}
\affiliation{Department of Physics, School of Basic Sciences, Indian Institute of Technology Bhubaneswar, Odisha 752050, India}
\author{Anandamay Das Bhowmik}
\email{ananda.adb@gmail.com}
\affiliation{S. N. Bose National Centre for Basic Sciences, Block JD, Sector III, Salt Lake, Kolkata 700 106, India}
\begin{abstract}
We investigate antidistinguishability of states within the framework of general probabilistic theories (GPTs). We formulate antidistinguishability, strong and equal antidistinguishability as refined notions that imposed additional constraint on the measurement effects. We establish general results relating these notions of antidistinguishability and derive an upper bound on the cardinality of equally antidistinguishable sets in terms of the affine dimension of the state space. We then study antidistinguishability in polygonal theories, obtaining conditions for antidistinguishability of a set of states. In consequence, we show that the set of all pure states in a polygon model is antidistinguishable. Additionally, we identify broad families of strongly and equally antidistinguishable states. Finally, using Random Exclusion Codes, whose success probability is governed by the antidistinguishability of different sets of encoding states, we probe the nonclassicality of polygon theories. We find that certain polygon models can outperform the optimal quantum value, while their optimal performance converges to the quantum limit in the large-polygon limit.
\end{abstract}

\maketitle
\section{Introduction}
Determining whether two physical processes can be reliably distinguished is a fundamental question in the characterization of physical theories. A related but less stringent concept is \emph{antidistinguishability}, which concerns ruling out possible processes rather than directly identifying the one that has occurred. More precisely, whereas distinguishability associates measurement outcomes with the identification of the underlying process \cite{Bae_2015,JánosABergou_2007,fuchs1996,manna25,manna26}, antidistinguishability permits certain alternative to be conclusively excluded from consideration \cite{Caves02,Johnston2025tightbounds,manna2025,manna20,Manna_2026,Russo}. Both notions have played an important role in quantum foundations, including studies concerning the ontological status and reality of quantum states \cite{leifer2014,barrett,Chaturvedi2020quantum,bhowmik2022,pbr,ray2024,chaturvedi26}. They also constitute important resources in quantum information theory, with applications to information processing and communication tasks \cite{PhysRevResearch.2.013326,leifer,srikumar,uola,Heinosaari_2019,manna_dist,manna2026,pandit}.

General probabilistic theories (GPTs) provide an operational framework that encompasses both classical and quantum theory while allowing for more general state spaces and measurement structures \cite{hardy2001,hardy2012limited,barrett04,PLAVALA20231,chiribella,chiribella10,janotta2014generalized,chiribella20}. Within this framework, physical states are represented by points in a convex state space, while measurements are described by effects acting on these states. This convex-geometric formulation provides a natural setting for investigating fundamental information-theoretic tasks and operational notions of separation \cite{barnaum07,PLAVALA20231,shahanadeh,colbeck23,Wright2021,heinosaari2022random,Barnum_2010,filipov,Heinosaari_2020,barnum2008,massar}. An important motivation for studying such generalized frameworks is to identify which operational features arise from the underlying mathematical structure of a theory and which are specific to quantum mechanics, thereby contributing to the broader investigation of the uniqueness of quantum theory \cite{Masanes_2011,hardy2001,sikora18,patra23,Kolangatt2025,banik19,sikora20,Stevens,pawlowski_ic,Janotta_2011,muller}. In particular, a substantial body of work has explored state distinguishability in GPTs, revealing how its operational properties are governed by the geometry of the state and effect spaces rather than being intrinsically tied to the Hilbert-space structure of quantum mechanics \cite{kimura09,KIMURA2010175,takagi,janotta2014generalized,banik19,Bhattacharya}. These developments naturally motivate the study of antidistinguishability within the GPT framework, where the operational conditions for excluding states can be investigated directly through the geometry of convex state and effect spaces.

In this work, we formulate the notion of antidistinguishability of states within the framework of general probabilistic theories (GPTs). After providing a brief overview of GPTs, we define antidistinguishability in this framework and introduce a stronger notion, termed strong antidistinguishability, which requires the elimination process to exhaust all the states in the set. We further introduce the notion of equal antidistinguishability, which arises when the non-eliminated states have equal overlap with the corresponding POVM element.
We then establish several general results concerning the relations among antidistinguishability, strong antidistinguishability, and distinguishability. A central result of this section determines the maximum cardinality of an equally antidistinguishable set of states.
We subsequently turn to a detailed analysis of antidistinguishability in polygonal theories. After establishing several useful lemmas, we derive a condition under which a set of states in a polygon model is antidistinguishable. As an immediate consequence, we show that every set of pure states in a polygon model is antidistinguishable. We then characterize strongly antidistinguishable and equally antidistinguishable sets in polygon models. As expected, all these results continuously reduce to the corresponding results for the qubit, represented by the disk model, in the limit where the number of vertices of the polygon tends to infinity.
Finally, we analyse a two-party communication task, termed Random Exclusion Codes (REC) \cite{excl}, to quantify the nonclassicality of polygon models. The success merit of REC is connected with antidistinguishability of the different sets of encoding states. We show that every polygon model exhibits nonclassical behavior and, moreover, that several polygon models can achieve a success probability exceeding the quantum optimum. Nevertheless, as the number of vertices increases, the optimal success merit of the polygon model converges to the quantum optimum.

We begin with a brief overview of generalized probabilistic theories, followed by the definitions of distinguishability and antidistinguishability. In Sec.~\ref{gen_res}, we present our general results concerning these notions. We then specialize to polygonal theories in Sec.~\ref{n_gon}, where we investigate their antidistinguishability properties in detail. In Sec.~\ref{rec}, we introduce the Random Exclusion Code task and study its performance in polygonal theories. Finally, we summarize our main findings in the conclusion and discuss open problems and possible directions for future research.

\section{Preliminaries}
\subsection{General Probabilistic Theory}
In general probabilistic theory, an elementary system is denoted by a tuple $(\Omega,\mathcal{E})$.

\textit{ State space--}
The state space $\Omega$ of a system is a convex, compact set embedded in the positive cone $V_+$ of some real vector space $V$.  Convexity ensures that statistical mixtures of valid preparations are also valid preparations, i.e., for any $\omega_1, \omega_2 \in \Omega$, any probabilistic mixture
\[
\omega=p \omega_1 + (1-p)\omega_2 \in \Omega, \quad \text{where } p \in [0,1].
\]
We call $\omega$ is a \emph{pure state} if $\omega=\omega_1=\omega_2$. We call $\omega$ is a \emph{mixed state} if it is not a pure state. 

Pure states are those that cannot be prepared by randomizing over different preparation procedures. Therefore, a pure state $\omega$ corresponds to a deterministic, non-randomized preparation. Mixed states, on the other hand, arise from probabilistic mixtures of preparation procedures and constitute the counterpart to pure states.

The set $\Omega$ is assumed to be topologically closed. Furthermore, finite dimensionality of $V$ guarantees compactness of $\Omega$.

\textit{ Effect space---}
Effects are linear functionals on $\Omega$ that map each state to a probability. The set of all such linear functionals is denoted by $\Omega^* \subset V_+^*$\cite{cheridito2013}. The framework of generalized probabilistic theories may assume that not all mathematically well-defined states and observables are physically implementable. For example, the set of physically allowed effects $\mathcal{E}$ may be a strict subset of $\Omega^*$. A theory in which all elements of $\Omega^*$ are allowed effects is called \emph{dual}.

A $n$-outcome measurement $M$ is specified by a collection of $n$ effects:
\[
M \equiv \{ e_j \}_{j=1}^n \quad \text{such that} \quad \sum_{j=1}^n e_j = u,
\]
where $u$ is the unit effect. $u\in\Omega^*$ is a constant function given as $u(\omega)=1$, for any $\omega\in\Omega$. That also implies that $\omega$ is normalized state. So, all the states residing at state-space are normalized.

Similar to the states, effects can be pure or mixed.
\begin{definition}
    (Face identified by a state)
\end{definition}
The face identified
by a state $\omega\in\Omega$ is the set $F_\omega$ of all states $\sigma\in\Omega$
such that $\omega = p\sigma + (1-p)\tau$, for some nonzero probability $p > 0$ and some state $\tau\in\Omega$.

In other words, $F_\omega$ is the set of all  states that show up in the convex decompositions of $\omega$. Clearly, if $\Phi$ is a
pure state, then one has $F_\Phi=\{\Phi\}$. The opposite situation is that
of completely mixed states.
\begin{definition}
    (Completely mixed state) 
\end{definition}
A state $\Sigma$
is completely mixed if every state $\sigma\in\Omega$ can stay in its
convex decomposition, that is, if $F_{\Sigma}=\Omega$.
\begin{lemma}\label{l1}
    Any effect other than zero effect acting on completely mixed state will produce non-zero probability.
\end{lemma}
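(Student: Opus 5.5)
The plan is to argue by contraposition: assume an effect $e$ satisfies $e(\Sigma)=0$ on a completely mixed state $\Sigma$, and show that $e$ is then the zero effect. Two facts do the work: effects take values in $[0,1]$ on every state, so $e(\sigma)\ge 0$ for all $\sigma\in\Omega$; and effects are linear, hence affine on convex combinations of states.

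The key steps, in order, are as follows. First, fix an arbitrary $\sigma\in\Omega$. Since $\Sigma$ is completely mixed, $F_\Sigma=\Omega$, so $\sigma\in F_\Sigma$. By the definition of the face identified by a state, there exist $p>0$ and $\tau\in\Omega$ with
\begin{equation}
\Sigma = p\,\sigma + (1-p)\,\tau .
\end{equation}
Second, apply $e$ and use linearity:
\begin{equation}
0 = e(\Sigma) = p\,e(\sigma) + (1-p)\,e(\tau).
\end{equation}
Both terms on the right are nonnegative, because $p>0$, $1-p\ge 0$, and $e(\sigma),e(\tau)\ge 0$. So each term vanishes. Since $p>0$, this gives $e(\sigma)=0$. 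As $\sigma$ was arbitrary, $e$ vanishes on all of $\Omega$.

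Finally, conclude that $e$ is the zero effect. Effects are identified as functionals on $\Omega$, so vanishing on $\Omega$ already means $e=0$. If one instead views $e$ as a linear functional on $V$, the same conclusion holds provided $\Omega$ spans $V$, or equivalently $V_+$ is generated by $\Omega$. This is the standard assumption: every $x\in V_+$ is of the form $\lambda\omega$ with $\lambda\ge 0$ and $\omega\in\Omega$, and $V=V_+-V_+$. Under it, $e$ vanishes on $V_+$ and hence, by linearity, on all of $V$.

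I expect no real obstacle; the argument is essentially immediate once the definition of $F_\Sigma$ is unpacked. The only point needing care is the last step. There I would state explicitly the convention that effects are determined by their action on states, that is, that $\Omega$ linearly spans the relevant space. The contrapositive then gives the statement of Lemma~\ref{l1}: any nonzero effect must give $e(\Sigma)>0$.
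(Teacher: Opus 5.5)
Your proof is correct and follows the same route as the paper: for an arbitrary $\sigma\in\Omega$, write $\Sigma=p\sigma+(1-p)\tau$ with $p>0$, apply the effect to get $e(\sigma)=0$, and conclude that $e$ vanishes on all of $\Omega$ and is therefore the zero effect. You also spell out two points the paper leaves implicit: the nonnegativity argument (both $p\,e(\sigma)$ and $(1-p)\,e(\tau)$ are $\ge 0$), and the convention that effects are determined by their values on states.
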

\begin{proof}
    Suppose, there exists an effect $\epsilon$ such that $\epsilon(\Sigma)=0$, where $\Sigma$ is a completely mixed state which can be written as $\Sigma=p\sigma+(1-p)\tau$, for all $\sigma\in\Omega$. Our assumption indicates $\epsilon(\sigma)=0$ for all $\sigma$. As we know $\sigma$ can be any state, that means there must exist one effect which produces null probability for all the states of $\Omega$. This is only possible if the effect is a zero effect.
\end{proof}
\textit{Transformation---}
Transformations map states to states (and effects to effects), i.e.,
\[
T : V \to V, \quad \text{with } T(V_+) \subseteq V_+.
\]
They are linear and preserve statistical mixtures. Moreover, they cannot increase the total probability, although they are allowed to decrease it.

\textit{Joint system---} For two subsystems,
$
\mathrm{Sys}(A)\equiv(\Omega_A,E_A)
\qquad\text{and}\qquad
\mathrm{Sys}(B)\equiv(\Omega_B,E_B),
$
a GPT also defines a composite system
$
\mathrm{Sys}(AB)\equiv(\Omega_{AB},E_{AB}).
$
The joint state space $\Omega_{AB}$ is required to be convex. Without imposing further restrictions, the relationship between the state spaces of the individual systems and that of their composite can be highly nontrivial. 
For physically meaningful composite theory, the joint state space $\Omega_{AB}$ can be represented within the positive cone of the tensor-product vector space $V_A\otimes V_B$\cite{hardy2001,hardy2012limited}. Moreover, the admissible composite state space is constrained between two extremal constructions, namely the \emph{minimal} and \emph{maximal tensor products}\cite{wilce1992, PLAVALA20231,barnaum07}.
As our work does not need the joint system formulation, we avoid the further details.

\subsection{Distinguishability}
A set of states $\{\omega_i\}_{i=1}^n$ is perfectly distinguishable, if there exists a $n$-outcome measurement $M$ having effects $\{e_j\}_{j=1}^n$ such that $e_j(\omega_i)=\delta_{ij}$. 

\begin{lemma}
    A completely mixed state can not be perfectly distinguished with respect to any other state.
\end{lemma}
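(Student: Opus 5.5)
The plan is a proof by contradiction that rests on Lemma~\ref{l1}. Suppose a completely mixed state $\Sigma$ and some other state $\omega\in\Omega$ are perfectly distinguishable. By the definition of perfect distinguishability there is a two-outcome measurement $M=\{e_1,e_2\}$ with $e_1+e_2=u$ such that
\[
e_1(\Sigma)=1,\quad e_2(\Sigma)=0,\quad e_1(\omega)=0,\quad e_2(\omega)=1 .
\]
If $\Sigma$ sits in a larger distinguishable set, restrict attention to the pair $\{\Sigma,\omega\}$ and coarse-grain the effects: $e_2$ is the sum of the effects not assigned to $\Sigma$. Perfect distinguishability of any set containing $\Sigma$ therefore implies perfect distinguishability of such a pair.

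The key step uses the condition $e_2(\Sigma)=0$. By Lemma~\ref{l1}, an effect that vanishes on a completely mixed state must be the zero effect, so $e_2=0$. But then $e_2(\omega)=0$, which contradicts $e_2(\omega)=1$. So no such measurement exists.

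It is also worth spelling out the convex-decomposition picture directly rather than only citing Lemma~\ref{l1}. Because $F_\Sigma=\Omega$, the state $\omega$ appears in a decomposition $\Sigma=p\,\omega+(1-p)\tau$ with $p>0$. Linearity gives $e_2(\Sigma)=p\,e_2(\omega)+(1-p)e_2(\tau)\ge p>0$, which is again a contradiction.

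There is no real obstacle here. The only point needing care is that the outcome assigned to $\omega$ must be nonzero on $\omega$ itself. This is exactly what perfect distinguishability demands, since $e_2(\omega)=1$, so the lemma applies to a genuinely nonzero effect.
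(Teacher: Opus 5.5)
Your proof is correct and follows the paper's argument: you set up the two-outcome discriminating measurement and apply Lemma~\ref{l1} to the effect $e_2$ with $e_2(\Sigma)=0$, which contradicts $e_2(\omega)=1$. Your direct bound $e_2(\Sigma)\ge p\,e_2(\omega)=p>0$ just inlines the proof of Lemma~\ref{l1}, and your coarse-graining remark correctly extends the argument to larger distinguishable sets.
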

\begin{proof}
    We consider the distinguishability of completely mixed state $\Sigma$ and any other state $\omega$. For perfect discrimination, there exists a two-outcome measurement having effects $e_1$ and $e_2$ such that $e_1(\omega)=0$, $e_2(\omega)=1$, $e_1(\Sigma)=1$ and $e_2(\Sigma)=0$. From lemma \ref{l1}, we know that $e_2(\Sigma)\neq 0$, for any $e_2$. This completes the proof.
\end{proof}

\textit{Axiom of perfect distinguishability \cite{chiribella}---}  If a state is not completely mixed (i.e. if it cannot be obtained as a mixture from any other state), then there exists at
 least one state that can be perfectly distinguished from it maximal distinguishable set.

One direct implication of this axiom is there must exist at least two perfectly distinguishable states (Lemma 20 of \cite{chiribella}).

\section{Antidistinguishability}
\begin{definition}(Antidistinguishability)
     A set of $n$ states $\{\omega_i\}_i$ are antidistinguishable if there exists a measurement such that $e_i(\omega_i)=0$, $\forall j$. The definition can be translated into a linear function as follows:
     \bea
\mathcal{A}[\{\omega_i\}_i]=\max_{\{e_j\}_j} \Bigg\{ \frac1n\sum_{j,i}  p(j\neq i|\omega_i,e_j)\Bigg\}
     \eea
 the above expression becomes,
\bea\label{pA}
\mathcal{A}[\{\omega_i\}_i] 
&=& 1 - \frac{1}{n}\min_{\{e_i\}_i}\left\{\sum_{i} \ e_i(\omega_i) \right\},
\eea 
where $\{e_i\}_i$ are the effects of the optimum measurement. Note that $e_i$ can be $\mathbf{0}$ for some $i$, where $\mathbf{0}$ is the zero effect.
\end{definition}

\begin{definition}
    (Strong Antidistinguishability) A set of $n$ states $\{\omega_i\}_i$ are strongly antidistinguishable if there exists a measurement such that $e_i(\omega_i)=0$, $\forall i$ and $e_i\neq\mathbf{0},\forall i$.
\end{definition}
The same linear expression of \eqref{pA} is also applicable here with the extra condition such that none of the effects is a zero effect. These definitions are already demonstrated in ref. \cite{manna20}.
\begin{definition}
    (Equally antidistinguishable set)
\end{definition}
A set of $n$ states $\{\omega_i\}_i$ are equally antidistinguishable if the measurement $\{e_i\}_i$ such that $e_i(\omega_i)=0$ and $e_i(\omega_j)=\frac{1}{n-1}$, where $j\neq i$.
By definition, equally antidistinguishable set is a strongly antidistinguishable set. For an example, consider equally antidistinguishable set whose cardinality is $3$. For such a set, the matrix elements of $e_i(\omega_j)$ can be written as,
\be
e_i(\omega_j)=
\begin{pmatrix}
    0 & 1/2 & 1/2\\
    1/2 & 0 & 1/2\\
    1/2 & 1/2 & 0
\end{pmatrix}
\ee
\begin{obs}
    From the definition, we can infer 
\bea
&&\text{Equal antidistinguishability}\nonumber\\
&&\implies \text{strong antidistinguishability}\nonumber\\ 
&&\implies \text{antidistinguishability}.
\eea
\end{obs}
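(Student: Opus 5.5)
The Observation consists of two implications, and I would verify each directly from the definitions.

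\emph{Equal antidistinguishability implies strong antidistinguishability.} Suppose $\{\omega_i\}_{i=1}^n$ is equally antidistinguishable via a measurement $\{e_i\}_i$. Then $e_i(\omega_i)=0$ for all $i$, so the first condition of strong antidistinguishability already holds. It remains to show that no $e_i$ is the zero effect. For $n\ge 2$, pick any $j\neq i$. By the definition of equal antidistinguishability, $e_i(\omega_j)=\frac{1}{n-1}>0$. The zero effect vanishes on every state, so $e_i\neq\mathbf{0}$. Since $i$ was arbitrary, the same measurement witnesses strong antidistinguishability.

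I would also record a consistency check. For each $j$, normalization $\sum_i e_i=u$ gives $\sum_i e_i(\omega_j)=0+(n-1)\cdot\frac{1}{n-1}=1$. This is compatible with the definition and confirms that the definition is not vacuous.

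\emph{Strong antidistinguishability implies antidistinguishability.} Strong antidistinguishability requires a measurement with $e_i(\omega_i)=0$ for all $i$, together with the extra condition $e_i\neq\mathbf{0}$. Dropping the extra condition leaves exactly the defining condition of antidistinguishability, since the definition explicitly permits zero effects. Hence the same measurement works. Equivalently, in terms of \eqref{pA}, the minimum of $\sum_i e_i(\omega_i)$ is $0$, so $\mathcal{A}[\{\omega_i\}_i]=1$.

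The only point that needs care, and so the main obstacle, is the degenerate case $n=1$. There $\frac{1}{n-1}$ is undefined, so the equal-antidistinguishability condition cannot be imposed. The statement should therefore be read for $n\ge 2$, and I would state this restriction explicitly. With that restriction, both implications are immediate and need no earlier lemma.
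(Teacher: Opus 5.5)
Your proof is correct and is the paper's argument: the paper states both implications as immediate from the definitions. The value $\frac{1}{n-1}>0$ on the other states rules out $e_i=\mathbf{0}$, and dropping the nonzero condition leaves plain antidistinguishability. You do not need to exclude $n=1$ by hand, because the only one-outcome measurement is $\{u\}$ and $u(\omega_1)=1\neq 0$; so no single state is antidistinguishable in any of the three senses, and both implications hold vacuously.
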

\section{General Results}\label{gen_res}
\begin{lemma}
    Let $\{\mathbbm{S}_i\}_i$ are the sets of states. The sufficient condition for $\bigcup_i\mathbbm{S}_i$ to be antidistinguishable if one of the set from $\mathbbm{S}_i$ is antidistinguishable but this is not a necessary condition.
\end{lemma}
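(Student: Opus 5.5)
The plan has two parts: a direct construction for sufficiency, and an explicit counterexample for the failure of necessity.

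\emph{Sufficiency.} Suppose that some $\mathbbm{S}_k=\{\omega_1,\dots,\omega_m\}$ is antidistinguishable. Then there is a measurement $\{e_1,\dots,e_m\}$ with $\sum_{j=1}^m e_j=u$ and $e_j(\omega_j)=0$ for every $j$. Label the remaining states of $\bigcup_i\mathbbm{S}_i$ as $\omega_{m+1},\dots,\omega_N$. We extend the measurement to $N$ outcomes by setting $e_{m+1}=\dots=e_N=\mathbf{0}$.
\begin{itemize}
\item The zero effect is explicitly allowed in the definition of antidistinguishability, so the extended collection is still a valid measurement, since $\sum_{j=1}^N e_j=u$.
\item It satisfies $e_j(\omega_j)=0$ for $j\le m$ by assumption.
\item It also satisfies $e_j(\omega_j)=\mathbf{0}(\omega_j)=0$ for $j>m$.
\end{itemize}
Hence $\sum_j e_j(\omega_j)=0$ in \eqref{pA}, so $\mathcal{A}=1$ and the union is antidistinguishable. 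One caveat needs handling. If a state of $\mathbbm{S}_k$ also appears in another $\mathbbm{S}_i$, we count it only once in the union. This causes no problem, because the extra outcomes are simply not needed.

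\emph{Non-necessity.} It suffices to exhibit sets that are individually not antidistinguishable but whose union is. The simplest choice is singletons. A single state $\{\omega\}$ is never antidistinguishable: the only one-outcome measurement is $\{u\}$, and $u(\omega)=1\neq 0$.

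Now take any two perfectly distinguishable states $\omega_1,\omega_2$, which exist by the consequence of the axiom of perfect distinguishability (Lemma 20 of \cite{chiribella}). Let $\{f_1,f_2\}$ be the measurement with $f_j(\omega_i)=\delta_{ij}$. Setting $e_1=f_2$ and $e_2=f_1$ gives $e_1(\omega_1)=0$ and $e_2(\omega_2)=0$, so $\{\omega_1\}\cup\{\omega_2\}$ is antidistinguishable even though neither $\{\omega_1\}$ nor $\{\omega_2\}$ is. To avoid relying on the axiom, the same example can be given concretely with two orthogonal qubit states, or with two antipodal pure states in the disk model.

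The main obstacle is minor. One must make sure that the singleton example is accepted as a legitimate case of "set of states", and that padding with zero effects is admissible. The latter is exactly what the remark in the definition ("$e_i$ can be $\mathbf{0}$") permits. If singletons feel degenerate, I would instead use the three qubit trine states split into $\{\psi_1,\psi_2\}$ and $\{\psi_3\}$. Two non-orthogonal pure states are not antidistinguishable, while the full trine is, which gives a non-trivial counterexample.
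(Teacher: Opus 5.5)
Your proof is correct and follows the paper's route: sufficiency by padding the antidistinguishing measurement of one $\mathbbm{S}_i$ with zero effects, and non-necessity by splitting a perfectly distinguishable pair (from the axiom of perfect distinguishability) across two sets and swapping the roles of the two effects. The only difference is that you use singletons, whereas the paper uses two three-element sets $\{\rho_1,\rho_2,\rho_3\}$ and $\{\rho_4,\rho_5,\rho_6\}$ and simply asserts that they are not antidistinguishable; with singletons this is checked at once, since the only one-outcome measurement is $\{u\}$ and $u(\omega)=1\neq 0$.
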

\begin{proof}
   Suppose, $\mathbbm{S}_{i=i*}$ is an antidistinguishable set. The antidistinguishing measurement has effects $\{g_j\}_{j=1}^r$, where $r$ is the cardinality of $\mathbbm{S}_{i=i*}$. The set $\bigcup_i\mathbbm{S}_i$ is also antidistinguishable with the measurement having effects $\{\{g_j\}_{j=1}^r,\{\mathbf{0}\}_{j=r+1}^{r'}\}$, where $r'$ is the cardinality of $\bigcup_i\mathbbm{S}_i$. The sufficiency is proved.

   Now consider two not antidistinguishable sets $\mathbbm{S}_1=\{\rho_1,\rho_2,\rho_3\}$ and $\mathbbm{S}_2=\{\rho_4,\rho_5,\rho_6\}$. In this construction, take $\rho_1$ and $\rho_4$ are distinguishable as we know there exists at least two perfectly distinguishable states (Axiom of perfect distinguishability). Therefore, there exists a measurement $\{f_1,f_2\}$ such that $f_1(\rho_1)=f_2(\rho_4)=0$ and $f_1(\rho_4)=f_2(\rho_1)=1$. Now, $\mathbbm{S}_1\bigcup\mathbbm{S}_2=\{\rho_1,\rho_2,\rho_3,\rho_4,\rho_5,\rho_6\}$. This set is antidistinguishable with the measurement having effects $\{f_1,\mathbf{0}, \mathbf{0}, f_2, \mathbf{0}, \mathbf{0}\}$. This completes the proof.
\end{proof}

The statement of this lemma can be presented in other way around. Thus the next corollary follows. 
\begin{coro}\label{cor1}
    If a subset of a set of states is antidistinguishable, the set is antidistinguishable. This implies the fact that any antidistinguishable set can be extended for arbitrary number of states. The maximum cardinality of an antidistinguishable set can be infinite.
\end{coro}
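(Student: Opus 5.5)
The corollary is essentially the sufficiency half of the preceding lemma, read in the opposite direction, so I would prove it by specialising that argument. Let $\mathbbm{T}=\{\omega_i\}_{i=1}^{n}$ be a set of states, and suppose a subset $\mathbbm{S}\subseteq\mathbbm{T}$, relabelled as $\{\omega_i\}_{i=1}^{r}$ with $r\le n$, is antidistinguishable via a measurement $\{g_i\}_{i=1}^r$. Write $\mathbbm{T}=\mathbbm{S}\cup(\mathbbm{T}\setminus\mathbbm{S})$ and apply the sufficiency part of the preceding lemma with $\mathbbm{S}_{i^*}=\mathbbm{S}$. Explicitly, I take the $n$-outcome measurement $e_i=g_i$ for $i\le r$ and $e_i=\mathbf{0}$ for $r<i\le n$. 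Then $\sum_i e_i=\sum_{i\le r} g_i=u$, so this is a valid measurement. We have $e_i(\omega_i)=g_i(\omega_i)=0$ for $i\le r$, and $e_i(\omega_i)=\mathbf{0}(\omega_i)=0$ for $i>r$. By \eqref{pA}, this gives $\mathcal{A}[\mathbbm{T}]=1$. The only point to check is that the definition of antidistinguishability allows zero effects, and the definition states explicitly that it does.

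For the extension claim, I start from any antidistinguishable set. Such a set exists: by the axiom of perfect distinguishability there are two perfectly distinguishable states $\rho_1,\rho_4$, and the measurement $\{f_2,f_1\}$, with $f_1,f_2$ as in the preceding proof, antidistinguishes $\{\rho_1,\rho_4\}$. I then adjoin any $m$ further states of $\Omega$. By the first part, the enlarged set is still antidistinguishable, and since $m$ is arbitrary the set can be extended by any number of states.

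For the last sentence, I would need $\Omega$ to contain infinitely many states. This holds whenever $\Omega$ is nontrivial: if it has two distinct states, it contains the whole segment between them by convexity. Adding infinitely many states, for instance the segment between $\rho_1$ and $\rho_4$, with zero effects gives an infinite antidistinguishable set. The main obstacle is this last step. A measurement has finitely many outcomes, so an infinite index set needs interpretation. I would read "maximum cardinality can be infinite" as saying there is no finite upper bound: for every $N$ there is an antidistinguishable set of cardinality $\ge N$. That follows directly from the finite construction, and I would state the result in this form to avoid measurements with infinitely many outcomes.
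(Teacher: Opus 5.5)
Your proposal is correct and follows the paper's route: the corollary is read off from the sufficiency half of the preceding lemma by padding the subset's antidistinguishing measurement with zero effects. Your additions (existence of a seed set via the axiom of perfect distinguishability, and reading ``infinite maximum cardinality'' as the absence of any finite bound) only make explicit what the paper leaves implicit. One trivial slip: with the paper's convention $f_1(\rho_1)=f_2(\rho_4)=0$, the effect paired with $\rho_1$ should be $f_1$ and the one paired with $\rho_4$ should be $f_2$, not the reverse.
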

\begin{lemma}
    Suppose, the states $\eta_1,\eta_2$ and $\eta_3$ belongs to the face of $\rho_1,\rho_2$ and $\rho_2$ respectively. If the set $\{\rho_1,\rho_2,\rho_3\}$ are antidistinguishable, the set $\{\eta_1,\eta_2,\eta_3\}$ is also antidistinguishable.
\end{lemma}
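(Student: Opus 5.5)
Note first that the statement contains a typo: $\eta_3$ should lie in the face of $\rho_3$, not $\rho_2$. The natural reading is $\eta_i\in F_{\rho_i}$ for $i=1,2,3$, and we prove it in that form.

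The plan is to show that the measurement which antidistinguishes $\{\rho_1,\rho_2,\rho_3\}$ also antidistinguishes $\{\eta_1,\eta_2,\eta_3\}$. The only input needed is the definition of the face $F_\omega$ together with positivity and linearity of effects.

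\emph{Step 1.} Fix an antidistinguishing measurement $\{e_i\}_{i=1}^3$ with $\sum_i e_i=u$ and $e_i(\rho_i)=0$ for each $i$. Zero effects are allowed here, in line with the definition of antidistinguishability.

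\emph{Step 2.} Since $\eta_i\in F_{\rho_i}$, the definition of the face gives a probability $p_i>0$ and a state $\tau_i\in\Omega$ with
\be
\rho_i=p_i\eta_i+(1-p_i)\tau_i .
\ee
Applying the linear functional $e_i$ gives
\be
0=e_i(\rho_i)=p_i\, e_i(\eta_i)+(1-p_i)\, e_i(\tau_i).
\ee

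\emph{Step 3.} Effects map states to probabilities, so $e_i(\eta_i)\ge 0$ and $e_i(\tau_i)\ge 0$. Both terms on the right are therefore nonnegative and sum to zero, so each vanishes. Because $p_i>0$, this forces $e_i(\eta_i)=0$. The same measurement $\{e_i\}_i$ thus satisfies the antidistinguishability condition for $\{\eta_1,\eta_2,\eta_3\}$.

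This is the same mechanism as in the proof of Lemma \ref{l1}: an effect that vanishes on a convex mixture with strictly positive weights vanishes on every component.

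There is no real obstacle. The one point needing care is that $p_i$ is strictly positive, which the definition of $F_\omega$ guarantees; with $p_i=0$ the argument would fail. It is also worth noting that the argument never uses the number of states, so it extends verbatim to any finite family.
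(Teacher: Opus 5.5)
Your proof is correct and follows the paper's argument: reuse the antidistinguishing measurement, write $\rho_i=p_i\eta_i+(1-p_i)\tau_i$, and conclude $e_i(\eta_i)=0$. Where the paper only says this is ``easily seen,'' you spell out the nonnegativity of effects on states and why $p_i>0$ is needed, and your reading of the typo ($\eta_3\in F_{\rho_3}$) agrees with the decomposition the paper uses.
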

\begin{proof}
    The measurement with effects $\{g_1,g_2,g_3\}$ is the antidistinguishing measurement for $\{\rho_1,\rho_2,\rho_3\}$. From the definition, we can write $g_i(\rho_i)=0$, for all $i$. We know $\rho_i=p_i\eta_i+(1-p_i)\tau_i$. It can easily seen that $g_i(\eta_i)=0$ as $g_i(\rho_i)=0$. So $\{\eta_1,\eta_2,\eta_3\}$ is antidistinguishable with the same measurement.
\end{proof}
This result can be trivially generalized for any number of states.
\begin{lemma}
A distinguishable set is a strongly antidistinguishable set.    
\end{lemma}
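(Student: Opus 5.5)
Given a perfectly distinguishable set $\{\omega_i\}_{i=1}^n$, I will build an antidistinguishing measurement directly out of the distinguishing one, and then check that none of its effects is the zero effect. By definition there is a measurement $\{e_j\}_{j=1}^n$ with $\sum_j e_j=u$ and $e_j(\omega_i)=\delta_{ij}$. The obvious candidate is a cyclic relabelling of these effects, so that each state is assigned an outcome that it never triggers.

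Concretely, for $n\geq 2$ I define $f_i:=e_{i+1}$, with indices taken modulo $n$, so that $f_n=e_1$. Since $\{f_i\}_i$ is just a permutation of $\{e_j\}_j$, it is still a valid measurement: every $f_i$ is an allowed effect and $\sum_i f_i=\sum_j e_j=u$. Then $f_i(\omega_i)=e_{i+1}(\omega_i)=\delta_{i+1,i}=0$ for all $i$, so $\{f_i\}_i$ antidistinguishes the set and the minimum in \eqref{pA} equals zero.

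Next I verify the non-triviality condition required for strong antidistinguishability. Each $f_i=e_{i+1}$ satisfies $f_i(\omega_{i+1})=1\neq 0$, so no $f_i$ is the zero effect. Hence $\{\omega_i\}_i$ is strongly antidistinguishable.

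The only real obstacle is making sure that the relabelled measurement has no zero effect and that it makes sense at all. This is why I use a derangement, here a cyclic shift, rather than simply padding with $\mathbf{0}$ as in the earlier lemma: the padding trick would produce zero effects and so would not give strong antidistinguishability. The shift needs $n\geq 2$, since for $n=1$ the statement is vacuous or ill-posed because the single effect would have to be $u$, and $u(\omega_1)=1$. I would state this restriction explicitly in the proof.
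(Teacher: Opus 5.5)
Your proof is correct and follows the paper's argument. The paper reuses the distinguishing measurement $\{g_i\}$, lets outcome $i$ exclude any $\rho_j$ with $j\neq i$, and notes $g_i\neq\mathbf{0}$ because $g_i(\rho_i)=1$. Your cyclic relabelling $f_i=e_{i+1}$ only makes explicit the indexing $f_i(\omega_i)=0$ that the definition requires.

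Your $n\geq 2$ caveat is apt. A singleton is distinguishable via $u$ but never antidistinguishable, so for $n=1$ the lemma is false rather than vacuous.
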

\begin{proof}
    A set consists of $\{\rho_i\}_{i=1}^n$ is distinguishable if there exists a measurement with effects $\{g_i\}_{i=1}^n$ such that $g_i(\rho_j)=\delta_{ij}$. To antidistinguish this set, the same measurement is executed. If the outcome corresponding to $g_i$ clicks, the eliminated state is $\rho_{j\neq i}$. From the definition of distinguishability, it is easy to see $g_i\neq 0, \forall i$. Therefore, the set is strongly antidistinguishable.
\end{proof}
\begin{lemma}
     Let $\{\mathbbm{S}_i\}_i$ are the sets of strongly antidistinguishable states. $\bigcup_i\mathbbm{S}_i$ is also strongly antidistinguishable.
\end{lemma}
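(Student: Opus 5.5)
The plan is to build the strongly antidistinguishing measurement for $\bigcup_i\mathbbm{S}_i$ as a uniform coarse-grained mixture of the measurements for the individual sets. I assume there are finitely many sets, say $m$, each finite. For each $k\in\{1,\dots,m\}$, write $\mathbbm{S}_k=\{\omega^{(k)}_j\}_{j=1}^{n_k}$. By hypothesis there is a measurement $\{g^{(k)}_j\}_{j=1}^{n_k}$ with
\[
\sum_j g^{(k)}_j=u,\qquad g^{(k)}_j(\omega^{(k)}_j)=0,\qquad g^{(k)}_j\neq\mathbf{0}\quad\text{for all } j.
\]

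First I treat the case where the sets are pairwise disjoint. I assign to the state $\omega^{(k)}_j$ in the union the effect $e^{(k)}_j=\frac{1}{m}g^{(k)}_j$. Three things then need checking.
\begin{enumerate}
\item Each $e^{(k)}_j$ is a valid effect, since a positive rescaling by $1/m\le1$ of an effect still maps states into $[0,1]$. Here I use that the effect set is closed under such rescalings, as in the dual case or any convex effect set containing $\mathbf{0}$.
\item Normalization holds: $\sum_{k,j}e^{(k)}_j=\frac1m\sum_k u=u$.
\item Each effect vanishes on its own state, $e^{(k)}_j(\omega^{(k)}_j)=\frac1m g^{(k)}_j(\omega^{(k)}_j)=0$, and is nonzero because $g^{(k)}_j\neq\mathbf{0}$.
\end{enumerate}
So this is a strongly antidistinguishing measurement for the union. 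Operationally, it amounts to choosing $k$ uniformly at random and performing the $k$-th measurement.

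Next I handle overlaps. If a state $\omega$ belongs to several of the sets, it appears only once in the union. To it I assign the sum of all the rescaled effects $\frac1m g^{(k)}_{j}$ for which $\omega^{(k)}_{j}=\omega$. Every summand vanishes on $\omega$, so the sum does too. The sum is nonzero because it dominates a nonzero effect; on states it is pointwise at least $\frac1m g^{(k)}_j$, which is not identically zero. Normalization is unchanged, because we have only grouped terms of the same total sum.

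The main obstacle is not the algebra but a modelling point. The argument needs scaled effects $\frac1m g$ and sums of effects to remain physically allowed effects. Any randomized, coarse-grained measurement in a GPT should be allowed, and I would state this explicitly as the operational justification. The restriction to finitely many sets is needed so that the weight $1/m$ exists. For countably many sets one could instead use weights $p_k>0$ with $\sum_k p_k=1$, but I would state the lemma for finite collections.
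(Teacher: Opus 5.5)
Your construction is essentially the paper's: you take the uniform mixture, with weight $1/m$, of the individual strongly antidistinguishing measurements (the paper writes $\tilde M=\frac{1}{|i|}\bigcup_i\{g^i_1,\dots,g^i_{r_i}\}$), and you check that each effect is nonzero, vanishes on its assigned state, and that the effects sum to $u$. Your extra care fills in points the paper's proof leaves implicit: grouping rescaled effects when the sets overlap so each distinct state gets exactly one effect, restricting to finitely many sets, and stating that the effect set must be closed under rescaling and coarse-graining.
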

\begin{proof}
    Suppose, the strongly antidistinguishing measurement for the set $\mathbbm{S}_i$ is $M^i=\{g^i_1,\cdots,g^i_{r_i}\}$. Consequently, $g_i> 0$ and $\sum_{i=1}^{r_i} g_{i}=u$. For the set, $\bigcup_i\mathbbm{S}_i$, the strongly antidistinguishing measurement would be $\Tilde{M}=\frac{1}{|i|}\bigcup_i\{g^i_1,\cdots,g^i_{r_i}\}$. It can be easily checked that $\frac{1}{|i|}g^i_x> 0$ and $\frac{1}{|i|}\sum_i g^i_{r_i}=\frac{1}{|i|} |i|u=u$. $|i|$ is the number of sets.
    \end{proof}
\begin{lemma}
A set consists of completely mixed state can be antidistinguishable but never be strongly antidistinguishable.   
\end{lemma}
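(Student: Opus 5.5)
The statement has two parts, and I will treat them separately: an existence claim ("can be antidistinguishable") and an impossibility claim ("never strongly antidistinguishable").

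For the existence part, I will use Corollary~\ref{cor1}: a set is antidistinguishable as soon as one of its subsets is. Take a completely mixed state $\Sigma$ and adjoin to it two perfectly distinguishable states $\rho_1,\rho_2$. Such a pair exists by the Axiom of perfect distinguishability (Lemma 20 of \cite{chiribella}). Let $\{f_1,f_2\}$ be the distinguishing measurement, with $f_1(\rho_1)=f_2(\rho_2)=0$ and $f_1(\rho_2)=f_2(\rho_1)=1$. Then the set $\{\rho_1,\rho_2,\Sigma\}$ is antidistinguished by the measurement $\{f_1,f_2,\mathbf{0}\}$. The zero effect is assigned to $\Sigma$, and trivially $\mathbf{0}(\Sigma)=0$. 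This exhibits an antidistinguishable set containing a completely mixed state, which is all the "can be" part needs.

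For the impossibility part, suppose a set $\{\omega_i\}_{i=1}^n$ contains a completely mixed state, say $\omega_k=\Sigma$, and is strongly antidistinguishable via $\{e_i\}_i$. By definition this means $e_i(\omega_i)=0$ for all $i$ and $e_i\neq\mathbf{0}$ for all $i$. In particular $e_k(\Sigma)=0$ with $e_k\neq\mathbf{0}$. This directly contradicts Lemma~\ref{l1}, which says every nonzero effect gives nonzero probability on a completely mixed state. Hence no strongly antidistinguishing measurement exists.

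The only delicate point is that Lemma~\ref{l1} must apply to every nonzero effect, and not merely to those generating extremal rays. The lemma's proof covers this: $e(\Sigma)=0$ forces $e(\sigma)=0$ for every $\sigma\in\Omega$, because each $\sigma$ appears with positive weight in a decomposition of $\Sigma$ and effects are nonnegative on states. An effect vanishing on all of $\Omega$ is the zero effect. So the main step is simply invoking Lemma~\ref{l1} correctly; the rest is definitional.
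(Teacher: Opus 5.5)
Your proof is correct and follows the paper's own route: the impossibility part applies Lemma~\ref{l1} directly to the nonzero effect that would have to vanish on the completely mixed state, and the existence part rests on Corollary~\ref{cor1}. The only difference is that you give an explicit witness, $\{\rho_1,\rho_2,\Sigma\}$ with measurement $\{f_1,f_2,\mathbf{0}\}$, obtained from the axiom of perfect distinguishability; the paper only remarks that such a set is antidistinguishable whenever its subset excluding $\Sigma$ is.
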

\begin{proof}
   From Lemma \ref{l1}, no effect other than the zero effect can yield zero probability when acting on the completely mixed state. By the definition of strong antidistinguishability, there must exist a set of effects $\{g_i\}_i$ such that $g_i(\rho_i)=0$ for all $i$. This condition cannot be satisfied for the state $\rho_{i^*}$ that is completely mixed. Therefore, any set containing the completely mixed state cannot be strongly antidistinguishable.

Nevertheless, such a set may still be antidistinguishable. In particular, this is possible when a subset of the states, excluding the completely mixed state, is antidistinguishable (From Corollary \ref{cor1}).
\end{proof}

\begin{thm}\label{th1}
Let
$
\mathcal{S}=\{\rho_1,\ldots,\rho_k\}
$
be a set of $k$ distinct states. Suppose that $\mathcal{S}$ is equally antidistinguishable by a measurement
$\{g_i\}_{i=1}^k$
Then
$k\leq d+1,$ where $d$ is the affine dimension of the state space.
\end{thm}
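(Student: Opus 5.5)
The plan is to show that the states $\rho_1,\ldots,\rho_k$ are affinely independent. Since an affinely independent set inside an affine space of dimension $d$ has at most $d+1$ points, this gives $k\leq d+1$ at once. The tool is the $k\times k$ matrix $G_{ij}=g_i(\rho_j)$. By the definition of equal antidistinguishability,
\be
G=\frac{1}{k-1}\left(J-\mathbbm{1}\right),
\ee
where $J$ is the all-ones matrix. For $k\geq 2$ the eigenvalues of $G$ are $1$ (once) and $-\frac{1}{k-1}$ (with multiplicity $k-1$). In particular $G$ is invertible. The case $k=1$ is trivial, since then $k\leq d+1$ holds for any $d\geq 0$.

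Next, suppose there is an affine relation $\sum_j c_j\rho_j=0$ with $\sum_j c_j=0$. It lives in the real vector space $V$, where the $g_i$ act as linear functionals. Applying each effect $g_i$ by linearity gives $\sum_j G_{ij}c_j=0$ for all $i$, that is $Gc=0$. Invertibility of $G$ then forces $c=0$. Hence the $\rho_j$ are linearly independent in $V$, and a fortiori affinely independent.

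To finish, note that $\Omega$ lies in an affine subspace $A$ of dimension $d$ (its affine hull). Its elements also lie in the hyperplane $u=1$, which is exactly what lets us pass from linear to affine statements. The affinely independent points $\rho_1,\ldots,\rho_k\in A$ span an affine subspace of dimension $k-1\leq d$, which gives $k\leq d+1$. Equivalently, the linear span of $\Omega$ has dimension $d+1$, because $u\neq 0$ excludes the origin from the affine hull, and linear independence of $k$ vectors in it gives the same bound.

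The main obstacle is a definitional one. The paper defines effects only as functionals on $\Omega$, so I must justify that they extend linearly to $V$, or to $\mathrm{span}\,\Omega$, so that the relation $\sum_j c_j\rho_j=0$ can be pushed through them. I would handle this by working only with affine combinations: an effect that is affine on $\Omega$ extends uniquely to an affine function on the affine hull of $\Omega$. Since $\sum_j c_j=0$, I can rewrite the relation as an equality between two convex combinations of states, normalised by $\sum_{c_j>0}c_j$. The affine property of $g_i$ then yields $Gc=0$ without needing any structure of $V$ beyond convexity.
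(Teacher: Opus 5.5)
Your proposal is correct and follows essentially the paper's route. Both arguments prove that $\rho_1,\ldots,\rho_k$ are affinely independent by applying each $g_i$ to a putative relation $\sum_j c_j\rho_j=0$ with $\sum_j c_j=0$, and then use the fact that a $d$-dimensional affine space holds at most $d+1$ affinely independent points. The differences are cosmetic: you get $c=0$ from the invertibility of the matrix $G_{ij}=g_i(\rho_j)$, whereas the paper uses $\sum_j c_j=0$ directly to reduce the $i$-th equation to $-c_i/(k-1)=0$; your rewriting of the relation as two equal convex combinations also cleanly settles the linear-extension point that the paper leaves implicit.
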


\begin{proof}
Since every GPT effect is an affine functional on the state space, for each $i$ the set
$
H_i=
\left\{
\rho:
g_i(\rho)=\frac{1}{k-1}
\right\}
$
is an affine hyperplane, provided $e_i$ is nonconstant. By
definition of equally antidistinguishable set, all states except $\rho_i$ belong to this hyperplane:
$
\rho_j\in H_i,\qquad j\neq i.
$
Moreover, $\rho_i\notin H_i$, since
$
g_i(\rho_i)=0\neq\frac{1}{k-1}.
$
We now show that the states ${\rho_1,\ldots,\rho_k}$ must be affinely independent. Suppose, to the contrary, that they are affinely dependent. Then there exist real numbers
$c_1,\ldots,c_k$, not all zero, such that
\begin{equation}
\sum_{j=1}^k c_j=0,
\qquad
\sum_{j=1}^k c_j\rho_j=0.
\label{eq:affine_dependence}
\end{equation}
For any fixed $i$, all $\rho_j$ with $j\neq i$ lie in the affine hyperplane $H_i$. Hence, applying the affine functional $g_i$ to Eq.~\eqref{eq:affine_dependence} gives
$
\sum_{j=1}^k c_j g_i(\rho_j)=0.
$
Using the definition, we obtain
$
c_i\,0+
\frac{1}{k-1}\sum_{j\neq i}c_j=0.
$
Since $\sum_j c_j=0$, this implies
$
-\frac{c_i}{k-1}=0,
$
and therefore
$
c_i=0.
$
Since $i$ was arbitrary, we obtain
$
c_1=c_2=\cdots=c_k=0,
$
contradicting the assumed affine dependence.

Thus the states $\rho_1,\ldots,\rho_k$ are affinely independent. A $d$-dimensional affine space can contain at most $d+1$ affinely independent points. Consequently,
$
k\leq d+1.
$
This completes the proof.
\end{proof}

\section{Antidistinguishability in Polygonal Model}\label{n_gon}
Polygonal model is described as,
$\mathrm{P}(n) \equiv (\Omega(n), \mathcal{E}(n))$, where $n$ is the number of pure states in the corresponding theory. The state
spaces \(\Omega(n)\) for elementary systems are regular polygons with $n$ vertices. The states and effects are represented by vectors in $\mathbbm{R}^3$.

 For any $n$, \(\Omega(n)\) is the convex hull of \(n\) pure states
\(\{\omega_i\}_{i=0}^{n-1}\) with
\bea\label{omega}
\omega_i :=
\begin{pmatrix}
r_n \cos\left(\frac{2\pi i}{n}\right) \\
r_n \sin\left(\frac{2\pi i}{n}\right) \\
1
\end{pmatrix},
\eea
where
$r_n := \sqrt{\sec\left(\frac{\pi}{n}\right)}.$

The unit effect is described by
\bea
u := 
\begin{pmatrix}
0 \\
0 \\
1
\end{pmatrix}.
\eea

The set \(\mathcal{E}(n)\) is the convex hull of the zero effect, the unit effect, and the extremal effects
\(\{e_i,\bar e_i=u-e_i\}_{i=0}^{n-1}\). For even $n$, 
\bea
e_i :=
\frac{1}{2}
\begin{pmatrix}
r_n \cos\left(\frac{(2i+1)\pi}{n}\right)\\
r_n \sin\left(\frac{(2i+1)\pi}{n}\right)\\
1   
\end{pmatrix}.
\eea
For odd \(n\),
\bea
e_i :=
\frac{1}{1+r^2_n}
\begin{pmatrix}
r_n \cos\left(\frac{2\pi i}{n}\right)\\
r_n \sin\left(\frac{2\pi i}{n}\right)\\
1   
\end{pmatrix}.
\eea
 It is easy to see that, for even n-gon, $\Bar{e}_i=e_{i+\frac{n}{2}(\mod n)}$. The similar relationship in odd n-gon stands out as $\Bar{e}_i=\frac{r_n^2}{2}\Big(e_{i+\frac{n-1}{2}(\mod n)}+e_{i+\frac{n+1}{2}(\mod n)}\Big)$. 

In the limit $n\rightarrow\infty$, the vertices of the regular $n$-gon become dense on a circle, while their convex hull approaches the corresponding filled disk. Thus, the limiting state space $\Omega(n)$ is the Bloch disk, which can be identified with the set of equatorial qubit states in quantum theory.
 
 \subsubsection{Results of even n-gon}
\begin{lemma}\label{5}
    One pure effect can eliminate two pure states.
\end{lemma}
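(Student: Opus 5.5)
The plan is a direct computation with the explicit vectors of the even $n$-gon. I take the action of an effect on a state to be the Euclidean inner product in $\mathbbm{R}^3$, $e(\omega)=e^{T}\omega$; this is the pairing under which $u(\omega_j)=1$ for every pure state~\eqref{omega}. For the pure effect $e_i$ and the pure state $\omega_j$ one then gets
\be
e_i(\omega_j)=\frac12\Big(r_n^2\cos\theta_{ij}+1\Big),\qquad \theta_{ij}=\frac{(2i+1-2j)\pi}{n}.
\ee
Here I use $r_n^2=\sec(\pi/n)$. The statement to prove is that, for fixed $i$, exactly two indices $j$ give $e_i(\omega_j)=0$. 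These two states are therefore eliminated with certainty whenever the outcome $e_i$ clicks.

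First, I solve $e_i(\omega_j)=0$. This is equivalent to $\cos\theta_{ij}=-\cos(\pi/n)$, that is, $\theta_{ij}\equiv \pi\pm\pi/n \pmod{2\pi}$. Substituting $\theta_{ij}$ gives the congruence $2(i-j)+1\equiv n\pm1 \pmod{2n}$. Because $n$ is even, both sides are odd integers, so the congruence is solvable. It yields $j\equiv i+\frac n2 \pmod n$ and $j\equiv i+\frac n2+1\pmod n$. These are two distinct, adjacent vertices, so $e_i$ vanishes on the whole edge between them.

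Second, I check that no other pure state is eliminated and that $e_i$ is a valid effect on all pure states. The angles $\theta_{ij}$ range over odd multiples of $\pi/n$. Hence $\cos\theta_{ij}$ lies in $[-\cos(\pi/n),\cos(\pi/n)]$, with the lower endpoint attained only at the two indices found above. Consequently $0\le e_i(\omega_j)\le \frac12(1+1)=1$ for all $j$, with equality to zero only in those two cases. By linearity this extends to all of $\Omega(n)$. It is also consistent with $\bar e_i=e_{i+n/2}$: the complementary effect gives probability one on exactly those two states.

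The only delicate point is the modular bookkeeping in the first step. One must make sure the two solutions are distinct modulo $n$, and that the parity argument (which relies on $n$ even) is what makes the solutions exist. For odd $n$ the analogous equation has a different form, which is why the lemma is stated for even $n$. The remaining steps are routine trigonometry.
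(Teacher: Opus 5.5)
Your proof is correct and takes the same route as the paper: write $e_i(\omega_j)=\frac12\bigl(r_n^2\cos\theta_{ij}+1\bigr)$, reduce vanishing to $\cos\theta_{ij}=-\cos(\pi/n)$, and solve the resulting congruence, using that $n$ is even. The differences are only in bookkeeping and extras. You fix the effect and solve for the two states ($j\equiv i+\frac n2$ and $j\equiv i+\frac n2+1 \pmod n$), which gives the claim directly, whereas the paper fixes the state and solves for the two effects. You also check that $0\le e_i(\omega_j)\le 1$, which the paper does not do.
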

\begin{proof}
    For an even $n$-gon, $e_j(\omega_i)=0$ implies
\bea
&&\frac12\Big(r_n^2\cos\left(\frac{2\pi i}{n}\right)\cos\left(\frac{(2j+1)\pi}{n}\right)\nonumber\\
&&
+r_n^2\sin\left(\frac{2\pi i}{n}\right)\sin\left(\frac{(2j+1)\pi}{n}\right)+1\Big)=0\nonumber\\
&\implies&
\cos\left(\frac{\pi}{n}(2i-2j-1)\right)+\cos\frac{\pi}{n}=0.
\eea
Since
$
\cos x=-\cos\frac{\pi}{n}
=\cos\left(\pi\pm\frac{\pi}{n}\right),
$
we obtain
\bea
\frac{\pi}{n}(2i-2j-1)
=\pi\pm\frac{\pi}{n}
\pmod{2\pi},
\eea
which is equivalent to
\bea
j=\left(i+\frac{n}{2}\right)\pmod n
\qquad\text{or}\qquad\nonumber\\
j=\left(i+\frac{n}{2}-1\right)\pmod n.\nonumber\\
\eea
Hence, every pure effect vanishes on exactly two pure states.
\end{proof}

\begin{coro}\label{even_c2}
    Two eliminated states are $\frac{2\pi}{n}$ radian apart.
\end{coro}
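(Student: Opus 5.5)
The plan is to read the claim off directly from the proof of Lemma~\ref{5}, which already pins down exactly which two pure states a given pure effect annihilates. For fixed $j$, that lemma shows $e_j(\omega_i)=0$ only for
\be
i \equiv j+\tfrac{n}{2} \pmod n \qquad\text{or}\qquad i \equiv j+1-\tfrac{n}{2} \pmod n .
\ee
Since $n$ is even, $-\tfrac{n}{2}\equiv\tfrac{n}{2}\pmod n$. The second solution is therefore $i\equiv j+\tfrac{n}{2}+1 \pmod n$. The two eliminated vertices are consequently consecutive indices, $i_1=j+\tfrac n2$ and $i_2=i_1+1$ (mod $n$).

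Next I will use the parametrization \eqref{omega}. The vertex $\omega_i$ sits at polar angle $\tfrac{2\pi i}{n}$ on the circle of radius $r_n$. The angular separation of $\omega_{i_1}$ and $\omega_{i_2}$ as seen from the centre (the maximally mixed point $(0,0,1)^T$) is
\be
\frac{2\pi(i_1+1)}{n}-\frac{2\pi i_1}{n}=\frac{2\pi}{n}.
\ee
The modular reduction does not affect this, since angles are defined mod $2\pi$. This shows the two eliminated states are adjacent vertices separated by $\tfrac{2\pi}{n}$ radians, as claimed.

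There is essentially no obstacle here. The only care needed is to resolve the sign of $\tfrac n2$ modulo $n$ correctly, so that the two solutions are recognized as neighbours rather than as states separated by some other multiple of $\tfrac{2\pi}{n}$. As a geometric sanity check, the pure effect $e_j$ points along angle $\tfrac{(2j+1)\pi}{n}$. That direction is the midpoint between the vertices $j$ and $j+1$, so $e_j$ vanishes on the edge diametrically opposite, whose endpoints are adjacent vertices. This agrees with the computation.
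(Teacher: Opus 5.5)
Your proof is correct and takes the same approach as the paper. Both read off from Lemma~\ref{5} that $e_j$ annihilates the adjacent vertices $\omega_{j+n/2}$ and $\omega_{j+n/2+1}$ (mod $n$), and then compute their angular difference $2\pi/n$; your explicit use of $-n/2\equiv n/2 \pmod n$ and the geometric check only spell out steps the paper leaves implicit.
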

\begin{proof}
    Angular separation of two states is $(\frac{2\pi(j+n/2+1)}{n}-\frac{2\pi(j+n/2)}{n})=\frac{2\pi}{n}$. Similarly, angular separation of two vanishing effects is also $2\pi/n$.
\end{proof}
As $n$ grows larger, this two eliminated states come closer. At $n\rightarrow\infty$, they become one, which is precisely the same case as for qubit states.

\begin{thm}
   The pure states $\{\omega_i\}_i$ are antidistinguishable if and only if there exist coefficients $c_i\ge 0$ such that
\be\label{cond}
\sum_i c_i Con\{\omega_{(i+n/2-q+1/2)\pmod n}\}_q=2(0,0,1)^T,
\ee
where $q\in\{0,1\}$ and $Con\{f_k\}_k$ denotes the convex combination of vectors $\{f_k\}_k$.
\end{thm}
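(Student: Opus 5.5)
The plan is to turn both directions into a statement about which effects in $\mathcal{E}(n)$ can vanish on a given pure state $\omega_i$, and then read the normalization $\sum_i E_i=u$ as Eq.~\eqref{cond}.

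First, a notational remark that makes Eq.~\eqref{cond} transparent. Comparing the formulas for $\omega_i$ and $e_j$ in the even case, we have formally
\[
e_j=\tfrac12\,\omega_{j+1/2},
\]
where $\omega_{j+1/2}$ denotes the vector of Eq.~\eqref{omega} evaluated at the half-integer index $j+\frac12$. By Lemma~\ref{5}, the two pure effects vanishing on $\omega_i$ are $e_{i+n/2}$ and $e_{i+n/2-1}$. In the half-integer notation these are $\tfrac12\omega_{i+n/2-q+1/2}$ with $q\in\{0,1\}$. So the left-hand side of Eq.~\eqref{cond} is exactly twice a sum of nonnegatively weighted convex mixtures of the two pure effects that kill $\omega_i$. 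The right-hand side $2(0,0,1)^T$ is $2u$.

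\emph{Sufficiency.} Suppose $c_i\ge0$ and convex weights $\lambda_i\in[0,1]$ satisfy Eq.~\eqref{cond}. Define
\[
E_i:=c_i\big(\lambda_i e_{i+n/2}+(1-\lambda_i)e_{i+n/2-1}\big).
\]
Then $\sum_i E_i=u$, and $E_i(\omega_i)=0$ by Lemma~\ref{5}. It remains to check that each $E_i$ is a legitimate effect, i.e.\ lies in $\mathcal{E}(n)$. Positivity on $\Omega(n)$ is clear. The bound $E_i\le u$ follows because $u-E_i=\sum_{k\ne i}E_k$ is a sum of positive functionals. Since the even-$n$ theory is dual (its effect set is the full interval $[0,u]$ in the dual cone, generated by the $e_k$ together with $\bar e_k=e_{k+n/2}$), this suffices. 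Indices with $c_i=0$ simply give zero effects, which the definition of antidistinguishability allows.

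\emph{Necessity.} Suppose a measurement $\{E_i\}$ antidistinguishes $\{\omega_i\}$. The key step, and the one I expect to be the main obstacle, is to show that every effect $E\in\mathcal{E}(n)$ with $E(\omega_i)=0$ lies in the cone generated by $e_{i+n/2}$ and $e_{i+n/2-1}$. I would argue as follows. Every element of $\mathcal{E}(n)$ is a convex combination of $\mathbf{0}$, $u$, the $e_k$ and the $\bar e_k$. Using $\bar e_k=e_{k+n/2}$ and $u=e_k+e_{k+n/2}$, every effect is therefore a nonnegative combination $E=\sum_k a_k e_k$. Each $e_k(\omega_i)\ge0$, so $E(\omega_i)=0$ forces $a_k e_k(\omega_i)=0$ for all $k$. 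By Lemma~\ref{5}, $a_k$ can be nonzero only for $k\in\{i+n/2,\ i+n/2-1\}$. (Alternatively, one can argue geometrically: the dual cone is polyhedral, and the effects vanishing on the extreme ray through $\omega_i$ form the face spanned by the two adjacent extremal rays.)

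Hence $E_i=a_ie_{i+n/2}+b_ie_{i+n/2-1}$ with $a_i,b_i\ge0$. Set $c_i=a_i+b_i$, and write $E_i$ as $c_i$ times a convex combination of the two effects when $c_i>0$ (take $c_i=0$ when $E_i=\mathbf{0}$). Summing over $i$, using $\sum_iE_i=u$ and the substitution $e_j=\frac12\omega_{j+1/2}$, gives Eq.~\eqref{cond}.
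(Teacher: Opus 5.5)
Your proposal is correct and follows the paper's route: Lemma~\ref{5} identifies $e_{i+n/2}$ and $e_{i+n/2-1}$ as the pure effects vanishing on $\omega_i$, and the substitution $e_j=\tfrac12\omega_{j+1/2}$ turns the normalization $\sum_i E_i=u$ into Eq.~\eqref{cond}. Your argument is more complete than the paper's, which simply asserts that an antidistinguishing measurement must be a nonnegative combination of those pure effects; your decomposition $E=\sum_k a_k e_k$ with $a_k\ge 0$ supplies that step. In the sufficiency direction you could also avoid appealing to duality: evaluating $\sum_k E_k=u$ at the antipodal vertex $\omega_{i+n/2}$ forces $c_i\le 1$, so each $E_i$ is already a convex combination of $\mathbf{0}$ and pure effects.
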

\begin{proof}
For a set of states $\{\omega_i\}_{i}$ is antidistinguishable if and only if there exist positive coefficients $\{c_j\}_{j}$ such that
\begin{equation}
\sum_{j} c_j e_j=u,
\end{equation}
where $e_j(\omega_i)=0$ and $u=(0,0,1)^T$ is the unit effect. From previous lemma, we know the relation between $i$ and $j$. Using this, the above equation becomes,
\begin{equation}\label{eq:anti-meas}
\sum_i c_{i} Con\{e_{(i+n/2-q) \pmod n}\}_q=u, 
\end{equation}
where $q\in\{0,1\}$.
Using the form of state and effect, 
$e_i=\dfrac{1}{2}\,\omega_{i+\frac12}$,
Eq.~\eqref{eq:anti-meas} becomes
\begin{equation}
\frac12\sum_i c_iCon\{\omega_{(i+n/2-q+1/2) \pmod n}\}_q=u,
\end{equation}
By simplification, we get \eqref{cond}.
For strong antidistinguishability, $c_i\neq 0, \forall i$. It is also easy to see that $\sum_i c_i=2$.
\end{proof}
This theorem invites an immediate and important corollary.
\begin{coro}
    Set of all pure states are strongly antidistinguishable.
\end{coro}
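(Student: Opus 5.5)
The plan is to exhibit one explicit measurement for the set $\{\omega_i\}_{i=0}^{n-1}$ of all pure states of the even $n$-gon. It will have a nonzero effect assigned to every state and will vanish on that state. The natural candidate is uniform weights on the pure effects. By Lemma~\ref{5}, the pure effect $e_{(i+n/2)\pmod n}$ vanishes on $\omega_i$. So we assign to $\omega_i$ the effect
\begin{equation}
g_i=\frac{2}{n}\,e_{(i+n/2)\pmod n}.
\end{equation}
This corresponds to choosing $c_i=2/n$ for all $i$, with the convex combination in \eqref{cond} concentrated on a single term $q=0$. Each $g_i$ is a positive multiple of an extremal effect, so it lies in $\mathcal{E}(n)$ and is nonzero. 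By Lemma~\ref{5}, $g_i(\omega_i)=0$.

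The remaining step is normalisation, $\sum_i g_i=u$, which is the condition of the preceding theorem with all $c_i>0$. As $i$ runs over $\mathbb{Z}_n$, the index $(i+n/2)\pmod n$ also runs over all of $\mathbb{Z}_n$. Hence
\begin{equation}
\sum_i g_i=\frac{2}{n}\sum_{j=0}^{n-1}e_j=\frac1n\sum_{j=0}^{n-1}\begin{pmatrix} r_n\cos\left(\frac{(2j+1)\pi}{n}\right)\\ r_n\sin\left(\frac{(2j+1)\pi}{n}\right)\\ 1\end{pmatrix}.
\end{equation}
The first two components vanish because they are the real and imaginary parts of $e^{i\pi/n}\sum_j e^{2\pi i j/n}=0$, a sum of $n$-th roots of unity. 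The third component equals $1$. Therefore $\sum_i g_i=u$, which gives a valid measurement. Equivalently, $\sum_i c_i=2$ in \eqref{cond}, because $\frac{1}{n}\sum_j\omega_{j+1/2}=(0,0,1)^T$ by the same roots-of-unity cancellation.

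Since every $g_i\neq\mathbf{0}$ and $g_i(\omega_i)=0$ for all $i$, the set of all pure states is strongly antidistinguishable by the definition of strong antidistinguishability. The only point needing care, which I expect to be the main (mild) obstacle, is the bookkeeping of indices modulo $n$. One must check that the map $i\mapsto i+n/2$ is a bijection, so that each pure effect is used exactly once. Otherwise the cancellation argument would not apply. Choosing the alternative $j=i+n/2-1$ instead works identically.
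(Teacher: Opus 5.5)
Your proposal is correct and is essentially the paper's own argument. The paper also keeps only the $q=0$ term in \eqref{cond}, assigning to $\omega_i$ the vanishing effect $e_{(i+n/2)\bmod n}$ with uniform weights $c_i=2/n$, and checks $\sum_i c_i\,\omega_{(i+n/2+1/2)\bmod n}=2u$ by the same roots-of-unity cancellation; your remarks that $2/n\le 1$ and that $i\mapsto i+n/2$ is a bijection just make the validity and normalisation of the measurement explicit (and the paper's odd-$n$ version of this corollary works the same way, using $e_{(i-(n-1)/2)\bmod n}$ from Lemma~\ref{10} with weights $(1+r_n^2)/n$).
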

\begin{proof}
In Eq. \eqref{cond}, fixing $r=0$, we find that 
\begin{equation}
\sum_i c_i\omega_{(i+n/2+1/2) \pmod n}=2u.
\end{equation}
Therefore, we have to find $c_i$ such that the above condition is satisfied.
We choose $c_i=2/n$, for all $i$. Then one can check that, from the above equation and \eqref{omega}, 
\bea
\frac{r_n}{n}\sum_i \cos(\pi+\frac{\pi}{n}+\frac{2\pi\mathbbm{i}}{n})&=& 0;\nonumber\\
\frac{r_n}{n}\sum_i \sin(\pi+\frac{\pi}{n}+\frac{2\pi\mathbbm{i}}{n})&=& 0;\nonumber\\
\frac1n\sum_i 1=1;
\eea
Above three equations are proved to be true. Thus our corollary is established. 
\end{proof}

Now we chracterize the strongly antidistinguishable set in even $n$-gon.

\begin{thm} 
Let $k\geq 3$ be a divisor of $n$. For any $a\in\{0,1,\ldots,n-1\}$, consider the set of $k$ equally spaced pure states
\begin{equation}
\mathcal{S}_{k,a}
:=
\left\{
\omega_{a+\ell n/k}:
\ell=0,1,\ldots,k-1
\right\},
\end{equation}
where all indices are understood modulo $n$. Then $\mathcal{S}_{k,a}$ is strongly antidistinguishable.
\end{thm}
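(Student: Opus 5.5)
Our plan is to build an explicit antidistinguishing measurement from pure effects, using Lemma~\ref{5}. For each state $\omega_{a+\ell n/k}\in\mathcal{S}_{k,a}$, Lemma~\ref{5} gives a pure effect that vanishes on it. Take the index $j_\ell=(a+\ell n/k+n/2)\bmod n$, so that $e_{j_\ell}(\omega_{a+\ell n/k})=0$. We then propose the candidate measurement
\begin{equation}
g_\ell:=\frac{2}{k}\,e_{j_\ell},\qquad \ell=0,\ldots,k-1 .
\end{equation}
Each $g_\ell$ is a nonzero scalar multiple of a pure effect with $2/k\le 1$. Hence $g_\ell\in\mathcal{E}(n)$ and $g_\ell\neq\mathbf{0}$, and by construction $g_\ell(\omega_{a+\ell n/k})=0$. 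So strong antidistinguishability follows once we show $\sum_\ell g_\ell=u$.

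To verify normalization, we use $e_j=\tfrac12\,\omega_{j+1/2}$ from the proof of the preceding theorem. The third components give $\frac{2}{k}\cdot k\cdot\frac12=1$, as required. The first two components reduce to
\begin{equation}
\frac{r_n}{k}\sum_{\ell=0}^{k-1}\cos\!\Big(\theta_0+\frac{2\pi\ell}{k}\Big)\quad\text{and}\quad \frac{r_n}{k}\sum_{\ell=0}^{k-1}\sin\!\Big(\theta_0+\frac{2\pi\ell}{k}\Big),
\end{equation}
where $\theta_0=\frac{(2a+n+1)\pi}{n}$. Here $j_\ell$ advances by $n/k$ each step, which is an angle of $2\pi/k$, and this is where the hypothesis $k\mid n$ is used. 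Both sums vanish because $\sum_{\ell}e^{2\pi i\ell/k}=0$ for $k\ge2$. This is exactly the condition \eqref{cond} with $q=0$, $c_\ell=2/k$ on the relevant indices and all other coefficients zero. The argument is the same computation as the corollary for all pure states, with $n$ replaced by $k$.

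The main obstacle is bookkeeping rather than substance. We must check that the index shift $n/2$ is an integer, which holds since $n$ is even, and confirm that $j_\ell$ really kills $\omega_{a+\ell n/k}$ under the paper's convention in Lemma~\ref{5}. If that convention points the other way, we would instead use $j=i+n/2-1$; this changes only $\theta_0$, so the vanishing of the sums is unaffected. The condition $k\ge3$ is not needed for the algebra, which works for $k\ge2$. It is simply the regime in which the set is not merely a distinguishable pair, and we would remark on this.
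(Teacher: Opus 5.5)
Your proof is correct and is essentially the paper's argument: since $\bar e_j=e_{j+n/2}$ in an even $n$-gon, your measurement $\frac{2}{k}e_{a+\ell n/k+n/2}$ is exactly the paper's $f_\ell=\frac{2}{k}\bar e_{a+\ell n/k}$, and in both cases normalization follows from $\sum_{\ell}e^{2\pi i\ell/k}=0$. Your phase $\theta_0=(2a+n+1)\pi/n$ is the correct one (the paper's displayed angle omits the $+1$, which is harmless), and your remark that $k\geq 2$ already suffices for the algebra is also right.
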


\begin{proof}
We construct explicitly a measurement
$\{f_\ell\}_{\ell=0}^{k-1}$ satisfying
\begin{equation}
f_\ell(\omega_{a+\ell n/k})=0,
\qquad
f_\ell\neq 0,
\qquad
\sum_{\ell=0}^{k-1}f_\ell=u,
\end{equation}
where,
\begin{equation}
f_\ell
:=
\frac{2}{k}
\overline e_{a+\ell n/k},
\qquad
\ell=0,1,\ldots,k-1.
\end{equation}
Since $k\geq3$, we have $2/k\leq1$, and hence $f_\ell$ is a valid nonzero effect. Moreover,
\begin{equation}
f_\ell
\left(
\omega_{a+\ell n/k}
\right)
=
\frac{2}{k}
\overline e_{a+\ell n/k}
\left(
\omega_{a+\ell n/k}
\right)
=0.
\end{equation}

It remains to show that the effects form a measurement. Using
$\overline e_j=u-e_j$, we have
\begin{align}
\sum_{\ell=0}^{k-1}\overline e_{a+\ell n/k}
&=
\sum_{\ell=0}^{k-1}
\frac{1}{2}
\begin{pmatrix}
r_n\cos\left(
\frac{(2a+2\ell n/k+n)\pi}{n}
\right)\\
r_n\sin\left(
\frac{(2a+2\ell n/k+n)\pi}{n}
\right)\\
1
\end{pmatrix}\nonumber\\
&=
\frac{k}{2}
\begin{pmatrix}
0\\
0\\
1
\end{pmatrix}
=
\frac{k}{2}u.
\end{align}
Here we have used
$
\sum_{\ell=0}^{k-1}
e^{2\pi i\ell/k}=0,
\qquad k\geq2.
$
Consequently,
\begin{equation}
\sum_{\ell=0}^{k-1}f_\ell
=
\frac{2}{k}
\sum_{\ell=0}^{k-1}\overline e_{a+\ell n/k}
=u.
\end{equation}
Thus, $\mathcal{S}_{k,a}$ is strongly antidistinguishable for even $n$.

\end{proof}

\begin{example}[Trine]
For $k=3$ and $a=0$, with $n=6m$, where $m\in\mathbb{N}$, the theorem gives the trine
\[
\mathcal{S}_{3,0}
=
\{\omega_0,\omega_{n/3},\omega_{2n/3}\}.
\]
These three states are equally spaced by an angle $2\pi/3$. The
measurement prescribed by the theorem is
\[
\left\{
\frac23\overline e_0,\,
\frac23\overline e_{n/3},\,
\frac23\overline e_{2n/3}
\right\}
=
\left\{
\frac23 e_{n/2},\,
\frac23 e_{5n/6},\,
\frac23 e_{7n/6}
\right\},
\]
where we used $\overline e_j=e_{j+n/2}$.
Hence the trine is strongly antidistinguishable.
\end{example}
One can check
\bea
&&\frac23 e_{n/2}(\omega_{n/3})=\frac13 (1+\sec(\pi/n)\cos(\pi/3+\pi/n)).\nonumber\\
\eea
Similarly,
$\frac23 e_{n/2}(\omega_{2n/3})= \frac13 (1+\sec(\pi/n)\cos(\pi/3-\pi/n)),
\frac23 e_{5n/6}(\omega_{0})=\frac13 (1+\sec(\pi/n)\cos(5\pi/3+\pi/n)),
\frac23 e_{5n/6}(\omega_{2n/3})=\frac13 (1+\sec(\pi/n)\cos(\pi/3+\pi/n)),
\frac23 e_{7n/6}(\omega_{0})=\frac13 (1+\sec(\pi/n)\cos(7\pi/3+\pi/n))$ and $\frac23 e_{7n/6}(\omega_{n/3})=\frac13 (1+\sec(\pi/n)\cos(5\pi/3+\pi/n));$

For finite \(m\), these probabilities are generally not equal. Thus, although the set is strongly antidistinguishable, it does not appear to be equally antidistinguishable with the measurement considered above. In the limit \(n\rightarrow\infty\), however, all these probabilities approach \(1/2\), and the set becomes equally antidistinguishable, as expected for the qubit case. This naturally raises the question of whether such a set can be found in an even \(n\)-gon. Interestingly, the above set is indeed equally antidistinguishable for finite \(m\); the key is to choose a suitable measurement that makes the non-vanishing probabilities equal, i.e., $1/2$.

\begin{thm}
$
\mathcal{S}_{3,0}
=
\{\omega_0,\omega_{n/3},\omega_{2n/3}\}
$
 is equally antidistinguishable. 
\end{thm}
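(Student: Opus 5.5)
We will prove the statement by exhibiting an explicit measurement $\{f_0,f_1,f_2\}$ on the even $n$-gon, $n=6m$, with $f_\ell(\omega_{\ell n/3})=0$ and $f_\ell(\omega_{\ell' n/3})=\tfrac12$ for $\ell'\neq\ell$. The measurement used in the previous example fails because each $\tfrac23\overline e_j$ vanishes on $\omega_{\ell n/3}$ only through a single extremal effect, which is tilted by $\pi/n$. Lemma~\ref{5} and Corollary~\ref{even_c2} suggest the remedy. Exactly two pure effects vanish on a given pure state $\omega_i$, namely $e_{i+n/2}$ and $e_{i+n/2-1}$, and their directions sit symmetrically at angles $\pm\pi/n$ around the antipode of $\omega_i$. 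Averaging these two effects removes the tilt.

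\emph{Step 1 (fix the functional).} The three trine vectors are linearly independent in $\mathbb{R}^3$, so the three linear conditions on $f_0$ determine it uniquely. The reflection symmetry about the axis through $\omega_0$ forces $f_0=(x,0,z)^T$. The conditions $x r_n+z=0$ and $-\tfrac12 x r_n+z=\tfrac12$ give $x=-\tfrac{1}{3r_n}$ and $z=\tfrac13$. Hence $f_0(\omega_k)=\tfrac13\bigl(1-\cos\tfrac{2\pi k}{n}\bigr)$. The effects $f_1,f_2$ are the rotations of $f_0$ by $2\pi/3$ and $4\pi/3$.

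\emph{Step 2 (it is an allowed effect).} This is the step that needs care, since $\mathcal{E}(n)$ is a specific convex hull rather than an arbitrary set of functionals. We will check the identity
\[
f_0=\tfrac13\bigl(e_{n/2}+e_{n/2-1}\bigr)+\tfrac13\,\mathbf{0}.
\]
The direction angles of $e_{n/2}$ and $e_{n/2-1}$ are $\pi\pm\pi/n$. Their sum is therefore $\bigl(-r_n\cos\tfrac{\pi}{n},0,1\bigr)^T$, and $r_n\cos\tfrac{\pi}{n}=\sqrt{\cos(\pi/n)}=1/r_n$. So the right-hand side equals $\bigl(-\tfrac{1}{3r_n},0,\tfrac13\bigr)^T=f_0$. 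It is a convex combination of extremal effects and the zero effect, hence it lies in $\mathcal{E}(n)$. The same argument with indices shifted by $n/3$ and $2n/3$ gives $f_1,f_2\in\mathcal{E}(n)$. These shifts are integers because $3\mid n$.

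\emph{Step 3 (normalization and equal values).} We have $\sum_\ell f_\ell=u$. The third components sum to $1$, and the planar components are $-\tfrac{1}{3r_n}$ times a sum of three unit vectors at angles $0,2\pi/3,4\pi/3$, which vanishes by $\sum_{\ell=0}^{2}e^{2\pi i\ell/3}=0$. Finally, $f_\ell(\omega_{\ell'n/3})=\tfrac13\bigl(1-\cos\tfrac{2\pi(\ell'-\ell)}{3}\bigr)$, which equals $0$ for $\ell'=\ell$ and $\tfrac12$ otherwise. This is exactly the equal-antidistinguishability matrix displayed after the definition, and it is consistent with Theorem~\ref{th1}, since $k=3=d+1$ here. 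The only substantive obstacle is Step 2; the rest is bookkeeping.
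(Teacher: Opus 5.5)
Your proof is correct and follows the paper's argument: you use the same three effects $f_\ell=\tfrac13\bigl(-\cos\theta_\ell/r_n,-\sin\theta_\ell/r_n,1\bigr)^T$, the same normalization check, and the same evaluation $f_\ell(\omega_j)=\tfrac13[1-\cos(\theta_j-\theta_\ell)]$. You differ only in certifying $f_\ell\in\mathcal{E}(n)$: the paper observes that $f_\ell$ takes values in $[0,\tfrac23]$ on every pure state, which tacitly uses that the polygon effect set is the full dual of $\Omega(n)$, whereas your decomposition $f_0=\tfrac13(e_{n/2}+e_{n/2-1})+\tfrac13\mathbf{0}$ shows membership in the stated convex hull directly and is therefore more self-contained.
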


\begin{proof}
Consider the following three effects:
\begin{align}
f_0&=
\frac{1}{3}
\left(
-\frac{1}{r_n},
0,
1
\right),\nonumber\\
f_{n/3}&=
\frac{1}{3}
\left(
-\frac{\cos(2\pi/3)}{r_n},
-\frac{\sin(2\pi/3)}{r_n},
1
\right),\nonumber\\
f_{2n/3}&=
\frac{1}{3}
\left(
-\frac{\cos(4\pi/3)}{r_n},
-\frac{\sin(4\pi/3)}{r_n},
1
\right).
\end{align}
We have
$
f_0+f_{n/3}+f_{2n/3}=(0,0,1)=u,
$
so that \(\{f_0,f_{n/3},f_{2n/3}\}\) forms a measurement.

Next, evaluating \(f_i\) on an arbitrary state \(\omega_j\), we obtain
\begin{align}
f_i(\omega_j)
&=
\frac{1}{3}
\left[
-\cos\theta_i\cos\theta_j
-\sin\theta_i\sin\theta_j+1
\right]\nonumber\\
&=
\frac{1}{3}
\left[
1-\cos(\theta_j-\theta_i)
\right].
\end{align}
Thus $\{f_i\}_i$ are valid effects as $0\leq \frac13\left[
1-\cos(\theta_j-\theta_i)
\right]\leq \frac23<1$.
For \(j=i\), this gives
$
f_i(\omega_i)
=
\frac{1}{3}[1-\cos(0)]
=0.
$
On the other hand, for \(j\neq i\), the angular separation between the states is \(2\pi/3\), and hence
$
\cos(\theta_j-\theta_i)
=
\cos\frac{2\pi}{3}
=-\frac{1}{2}.
$
Therefore,
$
f_i(\omega_j)
=
\frac{1}{3}
\left(1+\frac{1}{2}\right)
=\frac{1}{2},
\qquad i\neq j.
$
Thus, the conditional probabilities generated by the measurement are
$$
\begin{array}{c|ccc}
 & \omega_0 & \omega_{n/3} & \omega_{2n/3}\\
\hline
f_0 & 0 & \frac12 & \frac12\\
f_{n/3} & \frac12 & 0 & \frac12\\
f_{2n/3} & \frac12 & \frac12 & 0
\end{array}.
$$

Hence each outcome \(i\) occurs with zero probability on the corresponding state \(\omega_i\), while it occurs with strictly same probability on the other two states. Therefore,

$
f_i(\omega_i)=0,\qquad
f_i(\omega_j)=1/2\quad (i\neq j),
$
which proves that \(\mathbbm{S}\) is equally antidistinguishable. 
\end{proof}

\subsubsection{Results of odd $n$-gon}
\begin{lemma}\label{10}
    One pure effect can eliminate two pure states.
\end{lemma}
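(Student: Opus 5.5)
The plan mirrors the proof of Lemma~\ref{5}, now using the odd-$n$ form of the extremal effects. First, I evaluate a pure effect $e_j$ on a pure state $\omega_i$ using the Euclidean pairing of the vectors in \eqref{omega} and the odd-$n$ expression for $e_j$:
\begin{equation*}
e_j(\omega_i)=\frac{1}{1+r_n^2}\Big(r_n^2\cos\Big(\frac{2\pi(i-j)}{n}\Big)+1\Big).
\end{equation*}
Since $r_n^2=\sec(\pi/n)$, the condition $e_j(\omega_i)=0$ is equivalent to
\begin{equation*}
\cos\Big(\frac{2\pi(i-j)}{n}\Big)=-\cos\frac{\pi}{n}=\cos\Big(\pi\pm\frac{\pi}{n}\Big).
\end{equation*}

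Next, I solve this modulo $2\pi$. This gives $2(i-j)=n\pm1 \pmod{2n}$, that is,
\begin{equation*}
i=j+\tfrac{n-1}{2} \pmod n \qquad\text{or}\qquad i=j+\tfrac{n+1}{2} \pmod n.
\end{equation*}
Because $n$ is odd, both $(n\pm1)/2$ are integers, so both solutions are genuine vertices of the polygon. They are also distinct, differing by one index.

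As a consistency check, I will show that $e_j(\omega_i)\ge 0$ for every $i$. The values $\cos(2\pi k/n)$ with $k$ an integer never drop below $-\cos(\pi/n)$, because the angle $2\pi k/n$ closest to $\pi$ is $\pi\pm\pi/n$. Hence the two vertices found above are exactly the zeros, and $e_j$ vanishes on precisely these two states. These are the two vertices adjacent across from the vertex direction of $e_j$, which is the odd analogue of Corollary~\ref{even_c2}.

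I also plan to remark on the complementary extremal effects. Here
\begin{equation*}
\bar e_j(\omega_i)=\frac{r_n^2\big(1-\cos(2\pi(i-j)/n)\big)}{1+r_n^2},
\end{equation*}
which vanishes only at $i=j$. So the statement \lq\lq two states\rq\rq\ refers to the pure effects $e_j$, while each $\bar e_j$ eliminates a single state.

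The only step requiring care is the modular solution. I must verify that the parity of $n$ makes $(n\pm1)/2$ integral. I must also confirm that no other integer $k$ satisfies the cosine equation, which the monotonicity argument above handles.
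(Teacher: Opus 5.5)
Your argument is correct and is essentially the paper's: evaluate $e_j(\omega_i)$, reduce $e_j(\omega_i)=0$ to $\cos\big(\frac{2\pi}{n}(i-j)\big)=-\cos\frac{\pi}{n}$, and solve modulo $2\pi$ to get $i\equiv j+\frac{n\pm1}{2}\pmod n$. Your remark on the complementary effects is also right and corrects a slip in the paper, which claims that $(u-e_j)(\omega_i)=0$ reduces to $\cos\big(\frac{2\pi}{n}(i-j)\big)=0$ and so has no solution for odd $n$; as you compute, it actually reduces to $\cos\big(\frac{2\pi}{n}(i-j)\big)=1$, so $\bar e_j$ vanishes exactly on $\omega_j$ (consistent with the paper's own later REC lemma, where $e_k(\omega_m)=1$ forces $m=k$).
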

\begin{proof}
    For an odd $n$-gon, $e_j(\omega_i)=0$ implies
\bea
&&\frac{1}{1+r_n^2}\Big(r_n^2\cos\left(\frac{2\pi i}{n}\right)\cos\left(\frac{2\pi j}{n}\right)\nonumber\\
&&\qquad\qquad
+r_n^2\sin\left(\frac{2\pi i}{n}\right)\sin\left(\frac{2\pi j}{n}\right)+1\Big)=0\nonumber\\
&\implies&
\cos\left(\frac{2\pi}{n}(i-j)\right)+\cos\frac{\pi}{n}=0.
\eea
Since
$
\cos x=-\cos\frac{\pi}{n}
=\cos\left(\pi\pm\frac{\pi}{n}\right),
$
we obtain
\be
\frac{2\pi}{n}(i-j)
=\pi\pm\frac{\pi}{n}
\pmod{2\pi},
\ee
which is equivalent to
\be
j=\left(i-\frac{n}{2}\pm\frac12\right)\pmod n.
\ee

Since $u-e_j$ is also a pure effect in an odd $n$-gon, we must also consider the possibility that
$
(u-e_j)(\omega_i)=0.
$
This condition is equivalent to
$
\cos\left(\frac{2\pi}{n}(i-j)\right)=0.
$
However, this equation has no solution when $n$ is odd, since it would require
$
i-j\equiv \frac{n}{4}\ \text{or}\ \frac{3n}{4}\pmod n,
$
which is impossible for odd $n$. Therefore, the only vanishing pure effects are those satisfying
\[
j=\left(i-\frac{n}{2}\pm\frac12\right)\pmod n.
\]
\end{proof}
\begin{coro}\label{odd_c3}
    Two eliminated states are $\frac{2\pi}{n}$ radian apart.
\end{coro}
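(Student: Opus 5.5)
Corollary~\ref{odd_c3} follows directly from Lemma~\ref{10}, read in the same way as Corollary~\ref{even_c2} follows from Lemma~\ref{5}. We fix a pure effect $e_j$ of the odd $n$-gon. By Lemma~\ref{10}, the only pure states annihilated by a pure effect are those with $j\equiv i-\frac{n}{2}\pm\frac12 \pmod n$. We invert this relation to express the two eliminated state indices in terms of $j$: they are $i_\pm\equiv j+\frac{n}{2}\mp\frac12 \pmod n$. Since $n$ is odd, $\frac{n\pm1}{2}$ are integers, so the two indices are $i_-=j+\frac{n-1}{2}$ and $i_+=j+\frac{n+1}{2}$ (mod $n$). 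These are genuine vertex labels, and they are consecutive integers.

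Next we compute the angular separation from Eq.~\eqref{omega}. The vertex $\omega_i$ sits at polar angle $2\pi i/n$, so
\[
\frac{2\pi}{n}\Big(j+\tfrac{n+1}{2}\Big)-\frac{2\pi}{n}\Big(j+\tfrac{n-1}{2}\Big)=\frac{2\pi}{n}.
\]
This is independent of $j$, so it holds for every pure effect. One can also observe that the two eliminated states are placed symmetrically about the direction antipodal to $e_j$ (angles $\frac{2\pi j}{n}+\pi\pm\frac{\pi}{n}$), consistent with the equation $\cos\big(\frac{2\pi}{n}(i-j)\big)=-\cos\frac{\pi}{n}$ in the proof of Lemma~\ref{10}.

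The only point needing care is the modular reduction. The difference $\frac{2\pi}{n}(i_+-i_-)$ is defined modulo $2\pi$, so we should confirm that the two states are adjacent vertices, i.e.\ that their separation is $2\pi/n$ and not $2\pi-2\pi/n$. This is immediate because $i_+-i_-=1$ as integers before any reduction mod $n$. We should also recall from Lemma~\ref{10} that the complementary pure effects $u-e_j$ annihilate no pure state for odd $n$, so no other pair of eliminated states needs to be considered. Finally, we would note that the separation tends to zero as $n\to\infty$, recovering the qubit picture in which a single pure effect excludes a single pure state, exactly as remarked after Corollary~\ref{even_c2}.
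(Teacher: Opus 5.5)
Your proof is correct and is essentially the paper's: you invert the index relation of Lemma~\ref{10} to get the consecutive labels $j+\frac{n-1}{2}$ and $j+\frac{n+1}{2}$ and subtract their polar angles to obtain $2\pi/n$, and your check that the labels differ by $1$ before reduction mod $n$ is a useful addition. One side remark is wrong, though it does not affect the conclusion: the claim you take from Lemma~\ref{10}'s proof, that $u-e_j$ vanishes on no pure state, fails, since $(u-e_j)(\omega_i)=\frac{r_n^2\,(1-\cos(2\pi(i-j)/n))}{1+r_n^2}$ vanishes at $\omega_j$; it vanishes at that single vertex only, so no second pair of eliminated states arises and your argument stands.
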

\begin{proof}
    Angular separation of two states is $(\frac{2\pi(j+n/2+1/2)}{n}-\frac{2\pi(j+n/2-1/2)}{n})=\frac{2\pi}{n}$.
\end{proof}
As $n$ grows larger, this two eliminated states come closer. At $n\rightarrow\infty$, they become one, which is precisely the same case as for qubit states.
\begin{lemma}
   The pure states $\{\omega_i\}_{i\in(0,..,n-1)}$ are antidistinguishable if and only if there exist coefficients $c_i\ge 0$ such that
\be\label{cond_odd}
\sum_i c_i Con\{\omega_{(i-n/2\pm q)\pmod n}\}_q=(1+r_n^2)(0,0,1)^T,
\ee
where $q\in\{1/2,-1/2\}$ and $Con\{f_k\}_k$ denotes the convex combination of vectors $\{f_k\}_k$.
\end{lemma}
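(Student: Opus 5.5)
I will mirror the argument used for the even-gon theorem, now with the odd-gon effect structure supplied by Lemma~\ref{10}. The starting point is the operational reformulation of antidistinguishability. A set $\{\omega_i\}_i$ is antidistinguishable if and only if there is a measurement $\{g_i\}_i$ with $g_i(\omega_i)=0$. Since the effect set $\mathcal{E}(n)$ is the convex hull of $\mathbf{0}$, $u$, $\{e_j\}$ and $\{\bar e_j\}$, this amounts to writing $u=\sum_i g_i$, where each $g_i$ is a nonnegative combination of extremal effects vanishing on $\omega_i$.

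The first step is therefore to identify, for a fixed pure state $\omega_i$, exactly which extremal effects annihilate it. I will argue that any effect $g$ with $g(\omega_i)=0$ must be a nonnegative combination of extremal effects that each vanish on $\omega_i$. The reason is that probabilities are nonnegative, so in a decomposition $g=\sum_k \lambda_k x_k$ with $\lambda_k>0$, every term must vanish at $\omega_i$.

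Next I exclude the other extremal effects. The unit effect $u$ and every $\bar e_j$ cannot appear, because Lemma~\ref{10} shows that $\bar e_j(\omega_i)=0$ has no solution for odd $n$. Hence $g_i$ lies in the cone generated by the two pure effects $e_{(i-n/2\pm1/2)\bmod n}$ identified in Lemma~\ref{10}.

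Consequently, antidistinguishability is equivalent to the existence of $c_i\ge0$ and convex weights such that
\[
\sum_i c_i\,\mathrm{Con}\{e_{(i-n/2\pm q)\bmod n}\}_{q}=u,\qquad q\in\{1/2,-1/2\}.
\]
Conversely, any such decomposition directly defines valid effects $g_i$ that sum to $u$ and vanish on the respective $\omega_i$. This gives both directions of the equivalence.

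The final step is to translate effects into states. For odd $n$ one has $e_j=\frac{1}{1+r_n^2}\,\omega_j$ exactly, since the two vectors share the same angular components and the third component is $1$. Substituting this into the displayed equation and multiplying through by $1+r_n^2$ yields Eq.~\eqref{cond_odd}. Comparing third components also gives the side identity $\sum_i c_i=1+r_n^2$, which is worth recording as the analogue of $\sum_i c_i=2$ in the even case.

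The main obstacle is the rigorous justification of the reduction to pure effects. I need both that a zero-probability effect decomposes only into extremal effects vanishing on $\omega_i$, and that $\bar e_j$ never vanishes on a vertex for odd $n$. I would handle the first point via the face argument above, using positivity on the vertex. For the second, I would rely on the computation in Lemma~\ref{10} that $\bar e_j(\omega_i)=0$ would force $\cos\!\big(\tfrac{2\pi}{n}(i-j)\big)=0$, which is impossible for odd $n$.
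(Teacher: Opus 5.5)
Your route is the paper's. You recast antidistinguishability as $u=\sum_i g_i$ with $g_i(\omega_i)=0$, use Lemma~\ref{10} to find the pure effects annihilating $\omega_i$, and substitute $e_j=\omega_j/(1+r_n^2)$. The paper never justifies the reduction to pure effects, so your positivity argument is a useful addition. However, the step that excludes the $\bar e_j$ is false, because the computation in Lemma~\ref{10} that you rely on is wrong. In fact
\[
\bar e_j(\omega_i)=1-\frac{1+r_n^2\cos\bigl(\tfrac{2\pi}{n}(i-j)\bigr)}{1+r_n^2}=\frac{r_n^2\bigl(1-\cos\bigl(\tfrac{2\pi}{n}(i-j)\bigr)\bigr)}{1+r_n^2},
\]
which vanishes exactly when $i=j$. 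This agrees with the paper's own REC lemma, $e_k(\omega_m)=1\iff m=k$. So $\bar e_i(\omega_i)=0$ for every $i$, and $\bar e_i$ can legitimately occur in $g_i$. For $n=3$ (a classical trit), $\bar e_i$ is simply the effect ``not $i$''.

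The conclusion of that step still holds, but it needs a different justification. Using $r_n^2\cos(\pi/n)=1$, one checks the identity stated after the odd-gon effects, $\bar e_i=\frac{r_n^2}{2}\bigl(e_{i+\frac{n-1}{2}}+e_{i+\frac{n+1}{2}}\bigr)$. Since $i+\frac{n\mp1}{2}\equiv i-\frac{n\pm1}{2}\pmod n$, these are exactly the two pure effects $e_{(i-n/2\pm1/2)\bmod n}$ that vanish on $\omega_i$. Hence $\bar e_i$ lies in their cone, admitting it does not enlarge the set of admissible $g_i$, and your decomposition argument then goes through.

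A smaller point concerns the converse direction. The $c_i$ can exceed $1$: for example, with $n=3$ and the set $\{\omega_0,\omega_1\}$, take $g_0=\bar e_0=e_1+e_2$ and $g_1=e_0$. So the validity of each $g_i$ should be argued from $0\le g_i\le\sum_j g_j=u$ on $\Omega(n)$, together with the fact that $\mathcal{E}(n)$ contains every such functional. Asserting that the decomposition ``directly'' gives valid effects is not enough.
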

\begin{proof}
For a set of states $\{\omega_i\}_{i}$ is antidistinguishable if and only if there exist positive coefficients $\{c_j\}_{j}$ such that
\begin{equation}
\sum_{j} c_j e_j=u,
\end{equation}
where $e_j(\omega_i)=0$ and $u=(0,0,1)^T$ is the unit effect. From previous lemma, we know the relation between $i$ and $j$. Using this, the above equation becomes,
\begin{equation}\label{anti-meas}
\sum_i c_{i} Con\{e_{(i-n/2\pm q) \pmod n}\}_q=u, 
\end{equation}
where $q\in\{1/2,-1/2\}$.
Using the form of state and effect, 
$e_i=\dfrac{1}{1+r_n^2}\,\omega_{i}$.
Eq.~\eqref{anti-meas} becomes
\begin{equation}
\dfrac{1}{1+r_n^2}\sum_i c_iCon\{\omega_{(i-n/2\pm q) \pmod n}\}_q=u,
\end{equation}
By simplification, we get \eqref{cond_odd}.
For strong antidistinguishability, $c_i\neq 0, \forall i$.
\end{proof}
\begin{coro}
    Set of all pure states are strongly antidistinguishable.
\end{coro}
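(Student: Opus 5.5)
The plan is to mirror the even-$n$ corollary: exhibit an explicit measurement built from the pure effects $e_j$ of the odd $n$-gon, with every coefficient strictly positive, and check that it satisfies the condition \eqref{cond_odd} of the preceding lemma. By Lemma \ref{10}, the pure effect $e_j$ vanishes exactly on the two pure states $\omega_{(j+n/2\pm 1/2)\pmod n}$. Since $n$ is odd, $n/2\pm 1/2$ are integers, so these are the two states $\omega_{j+(n-1)/2}$ and $\omega_{j+(n+1)/2}$.

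As $j$ runs over $\{0,\dots,n-1\}$, the map $j\mapsto j+(n+1)/2 \pmod n$ is a bijection. Hence for each $i$ there is a unique $j(i)=i-(n+1)/2 \pmod n$ with $e_{j(i)}(\omega_i)=0$. I would assign the effect $g_i:=c\,e_{j(i)}$ to the state $\omega_i$, so that $g_i(\omega_i)=0$ automatically. In the language of \eqref{cond_odd}, this amounts to choosing the extreme point $q=-1/2$ in each convex combination.

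Next I would take all coefficients equal, $c=(1+r_n^2)/n$, and verify normalisation using $e_j=\frac{1}{1+r_n^2}\omega_j$:
\[
\sum_i g_i=\frac{1}{n}\sum_{j}\omega_j=\frac{1}{n}\Big(r_n\sum_j\cos\tfrac{2\pi j}{n},\; r_n\sum_j\sin\tfrac{2\pi j}{n},\; n\Big)^T=u .
\]
This works because the $n$-th roots of unity sum to zero. Equivalently, $\sum_i c_i\,\omega_{(i-n/2+1/2)\pmod n}=(1+r_n^2)u$, which is exactly \eqref{cond_odd}. The sum is still over all $n$ vertices, just reindexed.

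The step needing care is validity of the effects: each $g_i$ must lie in $\mathcal{E}(n)$ and be nonzero. It is nonzero since $c>0$ and $e_j\neq\mathbf{0}$. For validity, $\mathcal{E}(n)$ contains the zero effect and is convex, so $c\,e_j\in\mathcal{E}(n)$ provided $0<c\le 1$, i.e. $1+r_n^2\le n$, i.e. $1+\sec(\pi/n)\le n$. For odd $n\ge3$, $\sec(\pi/n)\le \sec(\pi/3)=2$, so $1+r_n^2\le 3\le n$, and the inequality holds (with equality at $n=3$). If one only wants to rely on the effects being positive linear functionals, one can instead check directly that $g_i(\omega_k)=\frac{1}{n}\big(1+r_n^2\cos\frac{2\pi(k-j)}{n}\big)\in[0,1]$. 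Either way all $g_i$ are nonzero valid effects summing to $u$ with $g_i(\omega_i)=0$, so the set of all pure states of the odd $n$-gon is strongly antidistinguishable.
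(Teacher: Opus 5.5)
Your proof is correct and is essentially the paper's own argument. You assign to each vertex one of the two extremal effects vanishing on it, take uniform weights $(1+r_n^2)/n$, and get normalisation from the vanishing sum of the $n$-th roots of unity. Your explicit check that $(1+r_n^2)/n\le 1$ is something the paper omits. It is in fact automatic: nonnegative functionals $c_je_j$ summing to $u$ force $c_j=c_je_j(\omega_j)\le 1$.

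There are two harmless slips. The index $j$ with $e_j(\omega_i)=0$ is not unique, since both $j=i-(n\pm1)/2$ work. Also, your displayed $\omega_{(i-n/2+1/2)\pmod n}$ corresponds to $q=+1/2$, not to your choice $j(i)=i-(n+1)/2$. This does not matter, because with equal weights both are reindexings of the sum over all vertices.
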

\begin{proof}
In Eq. \eqref{cond_odd}, fixing $r=1/2$, we find that 
\begin{equation}
\sum_i c_i\omega_{(i-(n-1)/2) \pmod n}=(1+r_n^2)u.
\end{equation}
Therefore, we have to find $c_i$ such that the above condition is satisfied.
We choose $c_i=(1+r_n^2)/n$, for all $i$. Then one can check that, from the above equation and \eqref{omega}, 
\bea
\frac{r_n}{n}\sum_i \cos(\pi-\frac{\pi}{n}-\frac{2\pi\mathbbm{i}}{n})&=& 0;\nonumber\\
\frac{r_n}{n}\sum_i \sin(\pi-\frac{\pi}{n}-\frac{2\pi\mathbbm{i}}{n})&=& 0;\nonumber\\
\frac1n\sum_i 1=1;
\eea
Above three equations are proved to be true.
\end{proof}
Similar to the case of even $n$-gon, we characterize the set of strongly antidistinguishable sets.
\begin{thm} 
Let $k\geq 3$ be a divisor of $n$. For any $a\in\{0,1,\ldots,n-1\}$, consider the set of $k$ equally spaced pure states
\begin{equation}
\Bar{\mathcal{S}}_{k,a}
:=
\left\{
\omega_{a+\ell n/k}:
\ell=0,1,\ldots,k-1
\right\},
\end{equation}
where all indices are understood modulo $n$. Then $\mathcal{S}_{k,a}$ is strongly antidistinguishable.
\end{thm}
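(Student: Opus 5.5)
The plan mirrors the even-$n$ construction: build the measurement from rescaled complements $\overline e_j=u-e_j$. The one new point is that in the odd $n$-gon the state $\omega_i$ is annihilated by the complement of the effect with the \emph{same} index. From the odd form of the effects, $e_i(\omega_i)=\frac{1}{1+r_n^2}\left(r_n^2+1\right)=1$, so
\begin{equation}
\overline e_i(\omega_i)=1-e_i(\omega_i)=0 .
\end{equation}
Each state of $\Bar{\mathcal{S}}_{k,a}$ therefore has a canonical nonzero effect, $\overline e_{a+\ell n/k}$, which lies in $\mathcal{E}(n)$ and vanishes on it. This holds even though Lemma \ref{10} shows that no single pure $e_j$ of the same index does the job.

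I will then set
\begin{equation}
f_\ell:=\lambda\,\overline e_{a+\ell n/k},\qquad \ell=0,\ldots,k-1,
\end{equation}
and fix $\lambda>0$ by normalisation. The sum is
\begin{equation}
\sum_\ell \overline e_{a+\ell n/k}=k\,u-\frac{1}{1+r_n^2}\sum_\ell \omega_{a+\ell n/k}.
\end{equation}
Because $k\ge 3$ divides $n$, the indices $a+\ell n/k$ correspond to angles $\frac{2\pi a}{n}+\frac{2\pi\ell}{k}$. Hence the first two components of $\sum_\ell\omega_{a+\ell n/k}$ vanish by $\sum_{\ell=0}^{k-1}e^{2\pi i\ell/k}=0$, exactly as in the even case, and the third component equals $k$. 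This gives
\begin{equation}
\sum_\ell \overline e_{a+\ell n/k}=\frac{k r_n^2}{1+r_n^2}\,u,
\qquad\text{so}\qquad
\lambda=\frac{1+r_n^2}{k r_n^2}=\frac{1+\cos(\pi/n)}{k}.
\end{equation}

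The only step needing care is validity: each $f_\ell$ must actually lie in $\mathcal{E}(n)$. Since $\overline e_j\in\mathcal{E}(n)$, $\mathbf{0}\in\mathcal{E}(n)$ and $\mathcal{E}(n)$ is convex, it suffices to have $0<\lambda\le 1$. This holds because
\begin{equation}
\lambda\le \frac{2}{k}\le\frac{2}{3}<1 .
\end{equation}
Moreover $f_\ell\neq\mathbf{0}$, since $\overline e_j$ is nonconstant and $\lambda>0$.

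Combining the three facts $f_\ell(\omega_{a+\ell n/k})=0$, $f_\ell\neq \mathbf{0}$ and $\sum_\ell f_\ell=u$ shows that $\Bar{\mathcal{S}}_{k,a}$ is strongly antidistinguishable. I expect no real obstacle beyond correctly identifying $\overline e_i$ as the vanishing effect, in place of the pair $e_{i+(n\pm1)/2}$ from Lemma \ref{10}.
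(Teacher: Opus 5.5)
Your proof is correct, and the skeleton matches the paper's: one rescaled extremal effect per state, normalisation from $\sum_{\ell}e^{2\pi i\ell/k}=0$, and validity from a weight in $(0,1]$. You do, however, use a different effect.

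The paper takes the pure effect $e_{j_\ell}$ with $j_\ell=a+\ell n/k+\frac{n-1}{2}$ and weight $\frac{1+\sec(\pi/n)}{k}$. This effect vanishes on the edge $\{\omega_{a+\ell n/k-1},\omega_{a+\ell n/k}\}$. The weight equals $1$ at $n=k=3$, so there the validity check is tight and the measurement consists of the pure effects themselves.

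You take $\overline e_{a+\ell n/k}$ with weight $\frac{1+\cos(\pi/n)}{k}\le\frac23$. This effect vanishes only on the vertex $\omega_{a+\ell n/k}$.

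The two are directly related. The paper's identity $\overline e_i=\frac{r_n^2}{2}(e_{i+\frac{n-1}{2}}+e_{i+\frac{n+1}{2}})$ shows that your measurement is exactly the equal mixture of the paper's measurement and its mirror image, which uses $j_\ell=a+\ell n/k+\frac{n+1}{2}$ and works equally well. So you trade the paper's arbitrary choice of sign for a symmetric construction with slack in the validity check.

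One caution concerns Lemma \ref{10}. Your key fact $\overline e_i(\omega_i)=0$ is right, and it agrees with the paper's later computation that $e_k(\omega_m)=1$ iff $m=k$. But it contradicts the second half of Lemma \ref{10}, which claims that no $u-e_j$ vanishes on a pure state. That part of the lemma contains a slip: in fact $(u-e_j)(\omega_i)=\frac{r_n^2}{1+r_n^2}\left(1-\cos\frac{2\pi(i-j)}{n}\right)$, which vanishes iff $i=j$. So cite only the first half of Lemma \ref{10} (which gives $e_i(\omega_i)\neq 0$), and keep your direct computation as the justification for the vanishing effect.
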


\begin{proof}
We construct explicitly a measurement
$\{f_\ell\}_{\ell=0}^{k-1}$ 
where
\begin{equation}
f_\ell
:=
\frac{1+r_n^2}{k}e_{j_\ell},
\end{equation}
and
$
j_\ell
:=
a+\frac{\ell n}{k}+\frac{n-1}{2}.
$
Since $n$ is odd, $(n-1)/2$ is an integer. Hence
$
j_\ell-\frac{n-1}{2}
=
a+\frac{\ell n}{k},
$
and therefore
\begin{equation}
e_{j_\ell}
\left(
\omega_{a+\ell n/k}
\right)
=0.
\end{equation}
Since $n\geq3$ and $k\geq3$,
$
\frac{1+r_n^2}{k}
=
\frac{1+\sec(\pi/n)}{k}
\leq1,
$
with equality only for $n=k=3$. Since the effect space contains $0$ and is convex, any scalar multiple $\lambda e$ with $0\leq\lambda\leq1$ is an allowed effect. Thus $f_\ell$ is a valid nonzero effect.

Furthermore,
\begin{equation}
f_\ell
\left(
\omega_{a+\ell n/k}
\right)
=
\frac{1+r_n^2}{k}
e_{j_\ell}
\left(
\omega_{a+\ell n/k}
\right)
=0.
\end{equation}

Finally, using the explicit form of $e_{j_\ell}$, we obtain
\begin{align}
\sum_{\ell=0}^{k-1}e_{j_\ell}
&=
\frac{1}{1+r_n^2}
\sum_{\ell=0}^{k-1}
\begin{pmatrix}
r_n\cos\left(
\frac{2\pi}{n}
\left(
a+\frac{\ell n}{k}+\frac{n-1}{2}
\right)
\right)\\
r_n\sin\left(
\frac{2\pi}{n}
\left(
a+\frac{\ell n}{k}+\frac{n-1}{2}
\right)
\right)\\
1
\end{pmatrix}.
\end{align}
The first two components vanish because
$
\sum_{\ell=0}^{k-1}
e^{2\pi i\ell/k}=0,
\qquad k\geq3.
$
Hence
\begin{equation}
\sum_{\ell=0}^{k-1}e_{j_\ell}
=
\frac{k}{1+r_n^2}u.
\end{equation}
Therefore,
\begin{equation}
\sum_{\ell=0}^{k-1}f_\ell
=
\frac{1+r_n^2}{k}
\sum_{\ell=0}^{k-1}e_{j_\ell}
=u.
\end{equation}
Hence $\Bar{\mathcal{S}}_{k,a}$ is strongly antidistinguishable for odd $n$.
\end{proof}
\begin{example}[Trine]
 For $k=3$ and $a=0$, with $n=6m+3$, , where $m\in\mathbb{N}$, the theorem gives the trine
\[
\Bar{\mathcal{S}}_{3,0}
=
\{\omega_0,\omega_{n/3},\omega_{2n/3}\}.
\]
These three states are equally spaced by an angle $2\pi/3$. The
measurement prescribed by the theorem is
\[
\left\{
\frac{1+r_n^2}{3} e_{(n-1)/2},\,
\frac{1+r_n^2}{3} e_{(5n-3)/6},\,
\frac{1+r_n^2}{3} e_{(7n-3)/6}
\right\}
\]
\end{example}
Similar to the even $n$-gon, one can check this set is not equally antidistinguishable for finite $n$ with this measurement. Although, $n\rightarrow\infty$, the set reduces to the qubit trine. Next, we prove that $\Bar{\mathcal{S}}_{3,0}$ is basically an equally antidistinguishable set with a suitable chosen measurement.
\begin{thm}
$
\Bar{\mathcal{S}}_{3,0}
=
\{\omega_0,\omega_{n/3},\omega_{2n/3}\}
$
 is equally antidistinguishable. 
\end{thm}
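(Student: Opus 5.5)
We mirror the construction used for the even $n$-gon trine, since $f_i(\omega_j)=\frac13[1-\cos(\theta_j-\theta_i)]$ depends only on the angular separation and not on the parity of $n$. For $n=6m+3$ the states $\omega_0,\omega_{n/3},\omega_{2n/3}$ sit at angles $\theta_0=0$, $\theta_{n/3}=2\pi/3$ and $\theta_{2n/3}=4\pi/3$, all with the same radius $r_n=\sqrt{\sec(\pi/n)}$. We therefore take the three functionals
\[
f_i=\frac13\left(-\frac{\cos\theta_i}{r_n},\,-\frac{\sin\theta_i}{r_n},\,1\right),\qquad i\in\{0,n/3,2n/3\}.
\]
The plan then has four steps.

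\emph{Step 1 (completeness).} Because $\sum_i e^{\mathrm{i}\theta_i}=0$ for the three cube-root angles, the first two components of $\sum_i f_i$ cancel. The third component sums to $1$, so $\sum_i f_i=u$.

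\emph{Step 2 (action on the trine).} Using the Euclidean pairing that the paper uses for effects on states,
\[
f_i(\omega_j)=\frac13\big[1-\cos(\theta_j-\theta_i)\big].
\]
This vanishes for $j=i$. For $j\neq i$ we have $\cos(\pm 2\pi/3)=-1/2$, which gives $f_i(\omega_j)=1/2$. The resulting table is exactly the equal-antidistinguishability matrix of the definition with $n-1=2$.

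\emph{Step 3 (validity; the main obstacle).} Here we must check that each $f_i$ is an allowed effect of the odd polygon, not merely a functional. The range $[0,2/3]$ appearing in the even-case argument is computed only on states at angles $2\pi k/n$. Since $\Omega(n)$ is the convex hull of its vertices, it suffices to check $0\le f_i(\omega_k)\le 1$ on all vertices $\omega_k$. This holds because $f_i(\omega_k)=\frac13[1-\cos(2\pi k/n-\theta_i)]\in[0,2/3]$, using that $\theta_i$ is itself a vertex angle since $3\mid n$. So $f_i$ lies in the dual $\Omega^*$.

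One further point must be settled. In the odd model $\mathcal{E}(n)$ is defined as the convex hull of $0$, $u$, $e_k$ and $\bar e_k$, so we must confirm that $f_i$ lies in this hull. Polygon models are self-dual up to scaling (equivalently, the stated $\mathcal{E}(n)$ equals the full set of functionals bounded in $[0,1]$ on $\Omega(n)$), so membership in $\Omega^*$ is enough. If we want this explicitly, we would write $f_i$ as a convex combination of $0$ and the extremal effect $\bar e_{\theta_i}$. This effect is $u-e_{\theta_i}$, which vanishes at the vertex $\omega_{\theta_i}$ and is maximal at the opposite edge. Both $f_i$ and $\lambda\bar e_{\theta_i}$ are of the form $\alpha(u-\beta\,\omega_{\theta_i}\text{-direction})$, and the proportionality is checked by matching components. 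If the proportionality is not exact, we instead decompose $f_i$ over $\{0,u,e_k,\bar e_k\}$ using the two effects $e_{j}$ adjacent to the direction $\theta_i+\pi$.

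\emph{Step 4 (conclusion).} The measurement $\{f_0,f_{n/3},f_{2n/3}\}$ satisfies the definition of an equally antidistinguishable set. As a consistency check, this is compatible with Theorem~\ref{th1}: here $k=3\le d+1=3$, since the affine dimension of the polygon is $d=2$.
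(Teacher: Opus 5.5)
Your proposal is correct and is essentially the paper's proof. It uses the same three functionals $f_i=\frac13\left(-\frac{\cos\theta_i}{r_n},-\frac{\sin\theta_i}{r_n},1\right)$, completeness from the vanishing sum of cube roots of unity, and the table $f_i(\omega_j)=\frac13[1-\cos(\theta_j-\theta_i)]$, while your Step 3 also addresses membership in $\mathcal{E}(n)$, which the paper leaves implicit. The proportionality you hedge on is in fact exact, $f_i=\frac{1+\cos(\pi/n)}{3}\,\bar e_i$ with factor in $[\frac12,\frac23)$, so $f_i$ is a convex combination of $0$ and $\bar e_i$ and no appeal to self-duality is needed (self-duality is anyway a different property from $\mathcal{E}(n)$ being the full dual, although the latter does hold for polygons).
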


\begin{proof}
The proof is similar to the case of even $n$-gon.
We use the same three effects:
\begin{align}
f_0&=
\frac{1}{3}
\left(
-\frac{1}{r_n},
0,
1
\right),\nonumber\\
f_{n/3}&=
\frac{1}{3}
\left(
-\frac{\cos(2\pi/3)}{r_n},
-\frac{\sin(2\pi/3)}{r_n},
1
\right),\nonumber\\
f_{2n/3}&=
\frac{1}{3}
\left(
-\frac{\cos(4\pi/3)}{r_n},
-\frac{\sin(4\pi/3)}{r_n},
1
\right).
\end{align}
we obtain
$
f_i(\omega_i)=0,
f_i(\omega_j)=\frac12 (i\neq j).
$
Thus, each effect excludes one state with certainty and occurs with equal probability on the remaining two states. 
\end{proof}




\section{Random Exclusion Codes}\label{rec}
Random Exclusion Codes (RECs) are a variant of the well-known communication task of Random Access Codes (RACs) \cite{excl}. In this task, Alice is given a string
$x = x_1x_2$ of length $2$, where each symbol $x_y$ is chosen independently and uniformly from the alphabet $\{1,2,3\}$. Thus, $x$ is chosen uniformly at random from the set $\{1,2,3\}^2$.
Alice encodes the string and communicates the resulting message to Bob. After receiving the message, Bob is given an index $y \in \{1,2\}$, chosen uniformly at random. His task is to correctly guess $z\neq x_y$, i.e., actually antidistinguish $x_y$ for a given $y$.
 The success metric is determined by the average success probability, defined through 
\be \label{Snd}
\mathcal{S}_{REC} = \frac{1}{18} \sum_{x,y} p(z\neq x_y|x,y) .
\ee
We analyze the success merit which can be achieved in different polygonal model. In this demonstration, we restrict ourselves in to two-outcome measurements implemented by Bob. We do not have proof of optimality of the outcome number of Bob's measurement. Therefore, our merit works as a lower bound of the actual merit can be obtained in the particular theory. Let us take the encoding states of Alice are $\{\omega_{ab}\}_{a,b=0}^2$ and Bob executes two dichotomic measurements such as $\{e_0,\Bar{e_0}=u-e_0\}$ and $\{e_k,\Bar{e_k}=u-e_k\}$. Bob's first measurement can be taken as the fixed measurement and the optimization can be implemented on Bob's second measurement and all the encoding states of Alice. Success merit takes the form as,
\bea
&&\mathcal{S}_{REC}(n)\nonumber\\
&=& \frac23 +\frac{1}{18}[(e_0+e_k)(\omega_{11}-\omega_{00})+(-e_0+e_k)(\omega_{01}-\omega_{10})]\nonumber\\
&&+\frac{1}{18}[e_0(\omega_{12}-\omega_{02})+e_k(\omega_{21}-\omega_{20})]
\eea
Note that, the second term, i.e., $[(e_0+e_k)(\omega_{11}-\omega_{00})+(-e_0+e_k)(\omega_{01}-\omega_{10})]$ occurs in the merit of Random Access codes \cite{heinosaari2022random}. using the success merit ($\mathcal{S}_{RAC}(n)$) of Random Access Codes, we can write the above equation as,
\bea\label{S_rec}
&&\mathcal{S}_{REC}(n)\nonumber\\
&=& \frac23 +\frac{8}{18}[\mathcal{S}_{RAC}(n)-\frac12]\nonumber\\
&&+\frac{1}{18}[e_0(\omega_{12}-\omega_{02})+e_k(\omega_{21}-\omega_{20})].\nonumber\\
\eea
The ref. \cite{heinosaari2022random} proved the tight bound of $\mathcal{S}_{RAC}(n)$ for every $n\geq 4$. In our case, the highest value of $[e_0(\omega_{12}-\omega_{02})+e_k(\omega_{21}-\omega_{20})]$ is $2$. This is achieved when $e_0(\omega_{12}) =e_k(\omega_{21})=1$ and $e_0(\omega_{02}) =e_k(\omega_{20}) =0$. Now we will prove it is always achievable in polygonal model.

\begin{lemma}
    There exists two pure states $\omega_m$ and $\omega_n$ such that $e_k(\omega_m)=1$ and  $e_k(\omega_n)=0$, where $e_k$ is any particular effect.
\end{lemma}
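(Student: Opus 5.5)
Here $e_k$ is read as one of the pure (extremal) effects of $\mathcal{E}(n)$, since these are the effects Bob uses in the REC construction. The statement then asks for two pure states on which $e_k$ takes the values $0$ and $1$. The zero is already available: Lemma~\ref{5} (even $n$) and Lemma~\ref{10} (odd $n$) give pure states with $e_k(\omega_i)=0$. So the plan is to find a pure state with $e_k(\omega_m)=1$, which is the same as $\bar e_k(\omega_m)=(u-e_k)(\omega_m)=0$.

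\emph{Even $n$.} Here $\bar e_k=e_{k+n/2 \pmod n}$ is itself a pure effect. Applying Lemma~\ref{5} to $e_{k+n/2}$ gives pure states $\omega_{k+1}$ and $\omega_{k}$ (indices mod $n$) on which it vanishes. Hence $e_k(\omega_k)=e_k(\omega_{k+1})=1$, while by Lemma~\ref{5} $e_k(\omega_{k+n/2})=e_k(\omega_{k+n/2+1})=0$. One line of arithmetic checks this directly: $e_k(\omega_k)=\tfrac12\bigl(1+\sec(\pi/n)\cos(\pi/n)\bigr)=1$.

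\emph{Odd $n$.} By Lemma~\ref{10}, $e_k$ vanishes on $\omega_{k\pm(n-1)/2}$. For the value $1$, compute directly:
\[
e_k(\omega_k)=\frac{r_n^2+1}{1+r_n^2}=1,
\]
since $\omega_k$ is parallel to $e_k$ up to the factor $1/(1+r_n^2)$. So the required pair is $\omega_m=\omega_k$ and $\omega_n=\omega_{k+(n-1)/2}$. The proof of Lemma~\ref{10} notes that $\bar e_k$ vanishes on no vertex when computed against $\omega_{i}$ with the cosine condition. This conflicts with $\bar e_k(\omega_k)=0$, so the step needs care. I will do the direct computation rather than rely on that remark: $(u-e_k)(\omega_k)=1-1=0$.

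If $e_k$ is instead meant to be an arbitrary effect, the statement fails for e.g.\ $\tfrac12 u$. In that case I would restrict to pure effects, or, for the REC application, note that we are free to choose $e_0,e_k$ extremal.

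The main obstacle is fixing the intended meaning of ``any particular effect'' and reconciling the odd case with the earlier remark about $\bar e_k$. The computations themselves are immediate from $r_n^2=\sec(\pi/n)$.
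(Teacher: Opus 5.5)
Your proof is correct and is essentially the paper's. The zeros come from Lemmas~\ref{5} and~\ref{10}, and the value $1$ comes from directly solving $e_k(\omega_m)=1$, which gives $m=k$ or $m=k+1$ for even $n$ and $m=k$ for odd $n$; your even-case route via $\bar e_k=e_{k+n/2}$ just repackages that computation. Your caution about the remark in the proof of Lemma~\ref{10} is justified: $(u-e_j)(\omega_i)=0$ actually reduces to $\cos\bigl(\frac{2\pi}{n}(i-j)\bigr)=1$, i.e.\ $i=j$, not $\cos\bigl(\frac{2\pi}{n}(i-j)\bigr)=0$, so that remark is wrong and is contradicted by the paper's own odd-case computation here.
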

\begin{proof}
    The second condition comes directly from the Lemma \ref{5} and Lemma \ref{10} for even $n$-gon and odd $n$-gon respectively. We have to prove the first condition. For even $n$-gon,
    \bea
&&\frac12\Big(r_n^2\cos\left(\frac{2\pi m}{n}\right)\cos\left(\frac{(2k+1)\pi}{n}\right)\nonumber\\
&&
+r_n^2\sin\left(\frac{2\pi m}{n}\right)\sin\left(\frac{(2k+1)\pi}{n}\right)+1\Big)=1\nonumber\\
&\implies&
\cos\left(\frac{\pi}{n}(2m-2k-1)\right)-\cos\frac{\pi}{n}=0,
\eea
which gives $m=k$ or $m= (k+1) \mod n$. Similarly, for odd $n$-gon, we find,
\bea
&&\frac{1}{1+r_n^2}\Big(r_n^2\cos\left(\frac{2\pi m}{n}\right)\cos\left(\frac{2\pi k}{n}\right)\nonumber\\
&&\qquad\qquad
+r_n^2\sin\left(\frac{2\pi m}{n}\right)\sin\left(\frac{2\pi k}{n}\right)+1\Big)=1\nonumber\\
&\implies&
\cos\left(\frac{2\pi}{n}(m-k)\right)=1.
\eea
Consequently, $m=k$.
\end{proof}
This lemma suffices that 
\bea\label{Rec_n}
\mathcal{S}_{REC}(n)
&=& \frac23 +\frac{8}{18}[\mathcal{S}_{RAC}(n)-\frac12]+\frac{2}{18}\nonumber\\
&=&\frac59+\frac49\mathcal{S}_{RAC}(n).
\eea
Now we calculate the optimum merit of REC for polygonal theory.
\begin{thm}
    For even $n$-gon, the optimum success merit of REC reads as,
  \begin{align}\label{even_rec}
&\mathcal{S}^{\max}_{\mathrm{REC}}(n)
= \\[-2pt]
&\begin{cases}
\displaystyle \frac{7+\sqrt{2}r_n^2}{9},
& n=4l,\quad l\in\mathbb{N}\ \text{odd},\\[6pt]
\displaystyle \frac{7+\sqrt{2}}{9},
& n=4l,\quad l\in\mathbb{N}\ \text{even},\\[6pt]
\displaystyle \frac{7+r_n^2\cos(l\pi/n)+\sin(l\pi/n)}{9},
& n=4l+2,\quad l\in\mathbb{N}\ \text{odd},\\[6pt]
\displaystyle \frac{7+\cos(l\pi/n)+r_n^2\sin(l\pi/n)}{9},
& n=4l+2,\quad l\in\mathbb{N}\ \text{even}.
\end{cases}
\end{align}
\end{thm}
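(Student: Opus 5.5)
The plan is to go through Eq.~\eqref{Rec_n}, $\mathcal{S}_{REC}(n)=\frac59+\frac49\mathcal{S}_{RAC}(n)$, and import the tight bounds on the $2\to1$ random access code merit $\mathcal{S}_{RAC}(n)$ for even $n$-gons proved in Ref.~\cite{heinosaari2022random}. Those bounds should read $\mathcal{S}^{\max}_{RAC}(n)=\frac12+\frac{\sqrt2\,r_n^2}{4}$ for $n=4l$ with $l$ odd, $\frac12+\frac{\sqrt2}{4}$ for $n=4l$ with $l$ even, and $\frac12+\frac14\big(r_n^2\cos(l\pi/n)+\sin(l\pi/n)\big)$ or $\frac12+\frac14\big(\cos(l\pi/n)+r_n^2\sin(l\pi/n)\big)$ for $n=4l+2$ with $l$ odd or even. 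Substituting into $\frac59+\frac49\mathcal{S}_{RAC}$ gives $\frac59+\frac29+\frac19(\cdot)=\frac{7+(\cdot)}{9}$, which reproduces each line of \eqref{even_rec}. Before quoting, I will check the normalisation of Ref.~\cite{heinosaari2022random} against the factor $\frac{8}{18}$ in \eqref{S_rec}, i.e.\ that their merit is the average over the $4\times2$ RAC terms.

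The substantive work is twofold: showing the upper bound is attained, and showing nothing beats it.

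\emph{Achievability.} I take the RAC-optimal pair of dichotomic measurements $\{e_0,\bar e_0\}$, $\{e_k,\bar e_k\}$ and the four RAC-optimal encodings $\omega_{00},\omega_{01},\omega_{10},\omega_{11}$. I then must choose the remaining encodings $\omega_{02},\omega_{12},\omega_{20},\omega_{21}$ so that
\[
e_0(\omega_{12})=e_k(\omega_{21})=1,\qquad e_0(\omega_{02})=e_k(\omega_{20})=0 .
\]
These four states enter no RAC term, so they can be chosen freely. The preceding lemma supplies, for a pure effect, a pure state giving $1$ (namely $m=k$ or $k+1$) and one giving $0$ (Lemma~\ref{5}). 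The one point to check here is that the RAC-optimal effects in Ref.~\cite{heinosaari2022random} are extremal effects $e_j$, or complements of them. For even $n$ we have $\bar e_j=e_{j+n/2}$, so complements are again of this form and the lemma applies either way.

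\emph{Optimality.} I bound the two brackets in \eqref{S_rec} separately. The last bracket is a sum of two differences of probabilities, so it is at most $2$. The RAC bracket is at most its tight optimum. The bracket variables and the RAC variables are disjoint except for the shared effects $e_0,e_k$, and since the $\pm1$ choice works for any extremal effects, the sum of the two maxima is attained jointly. This yields the stated closed forms.

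The main obstacle is the restriction to two-outcome measurements flagged in the text, which makes the result optimal only within dichotomic strategies. I will state the theorem under that restriction, as the paper does, rather than attempt a general proof. I also expect some care in matching the case split on $l$'s parity to the parity split of Ref.~\cite{heinosaari2022random}, especially where $r_n^2$ multiplies $\cos$ versus $\sin$.
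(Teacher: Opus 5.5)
Your proposal is correct and follows the paper's own route: the paper obtains the theorem by inserting the tight even-$n$ RAC values of Ref.~\cite{heinosaari2022random} into $\mathcal{S}_{REC}(n)=\frac59+\frac49\mathcal{S}_{RAC}(n)$, and your reconstructed RAC values do reproduce each line of \eqref{even_rec}. Your separate bounding of the two brackets in \eqref{S_rec}, and your joint achievability argument via extremal effects that take the values $0$ and $1$ on pure states, are what the paper establishes just before the theorem (the bound of $2$ on the last bracket and the lemma on $\omega_m,\omega_n$), so you are only making explicit what the paper's one-line proof relies on.
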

\begin{proof}
    The proof is direct consequence of \eqref{Rec_n}. The ref. \cite{heinosaari2022random} provides the maximum success merit of Random Access Codes for even $n$-gon. Substituting $\mathcal{S}_{RAC}(n)$ into \eqref{Rec_n}, we find \eqref{even_rec}.
\end{proof}

\begin{thm}
    For odd $n$-gon, the optimum success merit of REC reads as,
  \begin{align}\label{odd_rec}
&\mathcal{S}^{\max}_{\mathrm{REC}}(n)
= \\[-2pt]
&\begin{cases}
\frac{7+\cos(l\pi/n)+r_{2n}^2\sin(l\pi/n)}{9},
& n=4l+1,\quad l\in\mathbb{N}.,\\[6pt]
\frac{7+\cos((l+1)\pi/n)+r_{2n}^2\sin((l+1)\pi/n)}{9},
& n=4l+3,\quad l\in\mathbb{N}.
\end{cases}
\end{align}
\end{thm}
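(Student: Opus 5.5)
The plan mirrors the even case: reduce the claim to the known optimum of the random access code in odd polygons via Eq.~\eqref{Rec_n}, which gives $\mathcal{S}_{REC}(n)=\frac59+\frac49\mathcal{S}_{RAC}(n)$. The first step is to check that this relation still holds for odd $n$.

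Eq.~\eqref{Rec_n} was derived from the decomposition \eqref{S_rec} by saturating the extra term
\[
e_0(\omega_{12}-\omega_{02})+e_k(\omega_{21}-\omega_{20})
\]
at its maximum value $2$. For odd $n$ this saturation is supplied by the preceding lemma on existence of $\omega_m,\omega_n$ with $e_k(\omega_m)=1$ and $e_k(\omega_n)=0$. Here $m=k$, and $n$ is taken from Lemma~\ref{10}.

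The point to be careful about is that $\omega_{12},\omega_{02},\omega_{21},\omega_{20}$ are free encoding states. They do not appear in the RAC part $(e_0+e_k)(\omega_{11}-\omega_{00})+(-e_0+e_k)(\omega_{01}-\omega_{10})$. Hence the two parts can be optimized independently for any fixed pair $e_0,e_k$, and in particular for the optimal RAC pair. The only requirement is that the optimal RAC effects can be taken to be pure effects $e_j$, rather than complements $\bar e_j$ or mixtures.

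For odd $n$, $\bar e_j$ is not pure, so the lemma does not directly cover it. This is the step I expect to be the main obstacle. I would handle it in one of two ways:
\begin{enumerate}[label=(\roman*)]
\item Relabel outcomes, replacing $e\mapsto u-e$ and swapping the roles of the states in the RAC expression, so that each measurement uses a pure $e_j$.
\item Note that $\bar e_j(\omega_{j\pm(n-1)/2})=1$ by the odd-$n$ formula $\bar e_i=\frac{r_n^2}{2}(e_{i+\frac{n-1}{2}}+e_{i+\frac{n+1}{2}})$ combined with the lemma. Also note that $\bar e_j(\omega_j)=1-e_j(\omega_j)=0$ by the same lemma. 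So the value $2$ is attained for complements as well.
\end{enumerate}
Both routes show that the extra term equals $2$ at the RAC optimizer. Since the term can never exceed $2$, this gives $\mathcal{S}^{\max}_{REC}(n)=\frac59+\frac49\mathcal{S}^{\max}_{RAC}(n)$.

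The last step is substitution. I would take from Ref.~\cite{heinosaari2022random} the tight RAC optimum for odd polygons, split as $n=4l+1$ and $n=4l+3$. I expect it to have the form
\[
\mathcal{S}^{\max}_{RAC}(n)=\tfrac12+\tfrac14\bigl(\cos\theta+r_{2n}^2\sin\theta\bigr),
\]
with $\theta=l\pi/n$ or $\theta=(l+1)\pi/n$ respectively. The angle $\theta$ comes from choosing the pair of pure effects whose angular separation is closest to $\pi/2$: for odd $n$ the separations are multiples of $2\pi/n$, and the nearest one to a right angle depends on $n \bmod 4$. The factor $r_{2n}^2=\sec(\pi/2n)$ reflects the effect radius $\frac{2r_n}{1+r_n^2}$ in odd polygons.

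Substituting gives $\frac59+\frac49\bigl(\frac12+\frac14(\cdots)\bigr)=\frac{7+\cos\theta+r_{2n}^2\sin\theta}{9}$, which is Eq.~\eqref{odd_rec}. If the cited bound is stated in a different parametrization, I would rederive the RAC optimum directly. In that case the RAC value is attained at pure encoding states (vertices) and pure effects, so it suffices to maximize over vertex/effect index pairs with trigonometric identities and identify the maximizing index shift in each residue class mod $4$.
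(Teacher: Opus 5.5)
Your proposal is correct and follows the paper's route: you validate Eq.~\eqref{Rec_n} for odd $n$ and substitute the odd-polygon RAC optimum of Ref.~\cite{heinosaari2022random}, which does have the form $\tfrac12+\tfrac14(\cos\theta+r_{2n}^2\sin\theta)$ with $\theta=l\pi/n$ or $(l+1)\pi/n$. Your explicit check that the extra term equals $2$ at the RAC-optimal pair (the four states $\omega_{12},\omega_{02},\omega_{21},\omega_{20}$ are free, and both $e_j$ and $\bar e_j$ take the values $0$ and $1$ on vertices) fills in a step the paper leaves implicit. One minor correction to your heuristic: $r_{2n}^2=\sec(\pi/2n)$ is not the effect radius $\tfrac{2r_n}{1+r_n^2}$; it is the ratio of the odd polygon's maximal width $2r_n\cos(\pi/2n)$ to its minimal width $2r_n\cos^2(\pi/2n)$. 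These are the widths along the directions of the first two components of $e_0-e_k$ and $e_0+e_k$, respectively.
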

\begin{proof}
  The proof is similar to the case of even $n$-gon. Substituting $\mathcal{S}_{RAC}(n)$ from \cite{heinosaari2022random} into \eqref{Rec_n}, we find \eqref{odd_rec}.
\end{proof}
In Figure \ref{fig:comparison}, we present the plot of $\mathcal{S}^{\max}_{\mathrm{REC}}(n)$ with respect to $n\leq 31$.

\begin{obs}
    All $n$-gon models are nonclassical as $\mathcal{S}^{\max}_{\mathrm{REC}}(n)>\mathcal{S}^C_{REC}=8/9$\cite{excl}.
\end{obs}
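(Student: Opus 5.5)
The plan is to reduce the observation to the closed-form expressions \eqref{even_rec} and \eqref{odd_rec}, which were already obtained from the identity \eqref{Rec_n}, $\mathcal{S}_{REC}(n)=\frac59+\frac49\mathcal{S}_{RAC}(n)$. Since this is an affine increasing function of $\mathcal{S}_{RAC}(n)$, and $\frac59+\frac49\cdot\frac34=\frac89$, the strict inequality $\mathcal{S}^{\max}_{REC}(n)>8/9$ is equivalent to $\mathcal{S}^{\max}_{RAC}(n)>3/4$. The value $3/4$ is the classical optimum of the $2\to1$ random access code, so the claim becomes: every polygon model beats the classical RAC bound. In terms of the explicit formulas, it suffices to show that the bracketed correction term exceeds $1$ in each case. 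That term is $\sqrt2 r_n^2$ or $\sqrt2$ for $n=4l$; $r_n^2\cos(l\pi/n)+\sin(l\pi/n)$ or $\cos(l\pi/n)+r_n^2\sin(l\pi/n)$ for $n=4l+2$; and the analogous expressions with $r_{2n}^2$ for odd $n$.

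I would then go case by case.
\begin{itemize}
\item For $n=4l$, the claim is immediate, since $\sqrt2>1$ and $r_n^2=\sec(\pi/n)\ge1$.
\item For the remaining cases, the correction term has the form $A\cos\theta+B\sin\theta$ with $A,B\ge1$ and $\theta=l\pi/n$ or $(l+1)\pi/n$ lying in $(0,\pi/2)$. For $n=4l+2$ we have $\theta=l\pi/(4l+2)<\pi/4$. For odd $n$, $\theta$ is close to $\pi/4$, and one checks $0<\theta<\pi/2$ directly from $n=4l+1$ or $4l+3$.
\item On $(0,\pi/2)$ both $\cos\theta$ and $\sin\theta$ are strictly positive, so $A\cos\theta+B\sin\theta\ge\cos\theta+\sin\theta=\sqrt2\sin(\theta+\pi/4)>1$, because $\theta+\pi/4\in(\pi/4,3\pi/4)$.
\end{itemize}
This gives strict inequality for every $n$ covered by the formulas.

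The main obstacle is the small and boundary cases where the formulas may not literally apply. Examples are $l=0$, the triangle $n=3$ and the square, and whether the index conventions give $\theta\in(0,\pi/2)$ there, since the RAC formulas of \cite{heinosaari2022random} are stated for $n\ge4$. For these I would instead exhibit an explicit strategy. Pick two pure effects $e_0,e_k$ of Bob at distinct angles and choose Alice's encodings as the vertices maximizing $(\pm e_0+e_k)$. This gives a RAC value strictly above $3/4$ because the two dichotomic measurements are not jointly simulable by a single classical bit. For $n=3$, where the polygon is a classical simplex, this is exactly the delicate case: a triangle is a classical theory, and a strict inequality may fail there. I would check it numerically, and if needed restrict the observation to $n\ge4$.
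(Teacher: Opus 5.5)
Your main line is correct and is essentially the paper's own justification. The paper gives no separate argument; it reads the observation off the closed forms \eqref{even_rec} and \eqref{odd_rec} (and Fig.~\ref{fig:comparison}). Your reduction to $\mathcal{S}_{RAC}(n)>3/4$ via \eqref{Rec_n}, together with $A\cos\theta+B\sin\theta\ge\cos\theta+\sin\theta>1$ for $A,B\ge1$ and $\theta\in(0,\pi/2)$, is exactly the check that makes this reading rigorous.

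Your boundary discussion, however, is off. The square is not a boundary case: $n=4$ is the $l=1$ instance of the $n=4l$ branch and gives $1$. The only polygon not reached with $l\ge1$ is the triangle, and there the strict inequality does \emph{not} fail.

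\begin{enumerate}[label=(\roman*)]
\item The triangle is a simplex, so its effects are arbitrary $[0,1]$-valued assignments on the three vertices. Take $e_0=(1,1,0)$ and $e_k=(1,0,1)$ on $(\omega_0,\omega_1,\omega_2)$. Then $e_0+e_k$ has range $1$ and $e_k-e_0$ has range $2$, so the RAC bracket is $3$ and $\mathcal{S}_{RAC}\ge 7/8$.
\item The extra REC term still attains $2$, so $\mathcal{S}_{REC}(3)\ge 17/18>8/9$.
\item This agrees with the $l=0$ instance of the $n=4l+3$ branch: $\theta=\pi/3$ and $\cos\theta+r_6^2\sin\theta=\tfrac12+1=\tfrac32$. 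So your case analysis extends to $n=3$ once this direct check replaces the citation.
\end{enumerate}

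What breaks at $n=3$ is the interpretation, not the inequality. The value $8/9$ is the optimum for a classical \emph{bit}, whereas the triangle is a classical trit with three perfectly distinguishable states. Beating $8/9$ there reflects the larger message, not nonclassicality.

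For the same reason, your suggested rationale (two dichotomic measurements not jointly simulable by one bit) is not the mechanism in the triangle, where all measurements are jointly measurable. For $n\ge4$ the benchmark is fair: three perfectly distinguishable states in a polygon would have to be pairwise adjacent vertices, which happens only for the triangle.

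So the correct caveat is that the word ``nonclassical'' should be restricted to $n\ge4$. The stated inequality itself holds for every $n\ge3$.
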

\begin{obs}
    When $n\rightarrow\infty$, $\mathcal{S}^{\max}_{\mathrm{REC}}(n)\rightarrow\frac{7+\sqrt{2}}{9}$, which is same as the success merit of REC using qubit states and measurement \cite{excl}.
\end{obs}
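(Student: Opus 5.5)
The plan is to take the explicit formulas \eqref{even_rec} and \eqref{odd_rec} as given and compute the limit $n\rightarrow\infty$ separately along each residue class of $n$ modulo $4$. These classes are $n=4l$ with $l$ odd, $n=4l$ with $l$ even, $n=4l+2$ with $l$ odd or even, and $n=4l+1$, $n=4l+3$. If every class gives the same limit, then the whole sequence $\mathcal{S}^{\max}_{\mathrm{REC}}(n)$ converges to that common value. This is because the classes partition the tail of $\mathbb{N}$ into finitely many subsequences.

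\textbf{Step 1.} Record the basic limits. We have $r_n^2=\sec(\pi/n)\rightarrow 1$, and likewise $r_{2n}^2=\sec(\pi/2n)\rightarrow 1$. Both converge uniformly along any subsequence, since each depends on $n$ alone.

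\textbf{Step 2.} Treat the even classes.
\begin{itemize}
\item For $n=4l$ with $l$ even, the value is already the constant $(7+\sqrt{2})/9$.
\item For $n=4l$ with $l$ odd, the value is $(7+\sqrt2 r_n^2)/9$, which tends to $(7+\sqrt2)/9$ by Step 1.
\item For $n=4l+2$, write $l/n=l/(4l+2)\rightarrow 1/4$, so that $l\pi/n\rightarrow\pi/4$. By continuity of sine and cosine, both $r_n^2\cos(l\pi/n)+\sin(l\pi/n)$ and $\cos(l\pi/n)+r_n^2\sin(l\pi/n)$ tend to $\cos(\pi/4)+\sin(\pi/4)=\sqrt2$.
\end{itemize}

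\textbf{Step 3.} Treat the odd classes in the same way. For $n=4l+1$ we have $l\pi/n\rightarrow\pi/4$. For $n=4l+3$ we have $(l+1)\pi/n=(l+1)\pi/(4l+3)\rightarrow\pi/4$. Combined with $r_{2n}^2\rightarrow1$, every odd case also tends to $(7+\sqrt2)/9$.

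\textbf{Step 4.} Identify the limit with the quantum value. I would cross-check through \eqref{Rec_n}, using the qubit RAC optimum $\mathcal{S}^Q_{RAC}=\frac12\left(1+\frac{1}{\sqrt2}\right)$:
\[
\frac59+\frac49\cdot\frac12\left(1+\frac1{\sqrt2}\right)=\frac{7+\sqrt2}{9}.
\]
This agrees with the qubit REC value reported in \cite{excl}.

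The computations themselves are routine. The only point needing care is the bookkeeping that every residue class genuinely converges to the same value, especially that the shifted arguments $l\pi/n$ and $(l+1)\pi/n$ all approach $\pi/4$; that is what lets the subsequence limits be glued into a limit of the full sequence. A secondary caveat is that \eqref{Rec_n} was derived with Bob restricted to two-outcome measurements. So the statement concerns these formulas, with the limit matching the quantum optimum rather than proving any optimality in the limit.
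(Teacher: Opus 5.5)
Your proposal is correct and follows the same route the paper implicitly takes. The paper gives no separate argument for this observation; it is read off by letting $n\to\infty$ in the piecewise formulas \eqref{even_rec} and \eqref{odd_rec}, using $r_n^2,r_{2n}^2\to1$ and the fact that the angles $l\pi/n$ and $(l+1)\pi/n$ tend to $\pi/4$. Your residue-class bookkeeping and the cross-check $\frac59+\frac49\cdot\frac12\left(1+\frac1{\sqrt2}\right)=\frac{7+\sqrt2}{9}$ via \eqref{Rec_n} make explicit what the paper leaves implicit. As in the paper, the identification with the qubit value ultimately rests on the citation to \cite{excl}.
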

\begin{obs}
    For $n=4$, success merit reaches the value of $1$. It is obvious from \eqref{Rec_n} as the merit of Random Access Codes also reaches its algebraic maximum, i.e., $1$ in $n=4$. 
\end{obs}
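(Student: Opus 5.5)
The observation concerns the square model $n=4$. My plan is to use the reduction \eqref{Rec_n}, $\mathcal{S}_{REC}(n)=\frac59+\frac49\mathcal{S}_{RAC}(n)$. With it, $\mathcal{S}_{REC}(4)=1$ holds exactly when $\mathcal{S}_{RAC}(4)=1$. So it suffices to exhibit encoding states and two dichotomic measurements in $\mathrm{P}(4)$ that succeed with certainty, while keeping the REC-specific term at its maximum $2$. I would not simply quote \eqref{even_rec}: with $n=4$ ($l=1$, odd) it gives $(7+\sqrt2 r_4^2)/9=(7+2)/9=1$, since $r_4^2=\sec(\pi/4)=\sqrt2$. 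That consistency check is worth recording, but I want an explicit construction so that both bracketed terms in \eqref{S_rec} are verified to be simultaneously optimal.

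\textbf{Construction.} For $n=4$ the state space is a square with vertices $\omega_0,\dots,\omega_3$, and the effects $e_0,e_1$ together with $\bar e_0=e_2$, $\bar e_1=e_3$ are the facet functionals. By the lemma just before \eqref{Rec_n}, $e_k$ takes value $1$ on $\omega_k,\omega_{k+1}$ and value $0$ on the opposite two vertices. So I take Bob's measurements to be $\{e_0,\bar e_0\}$ for $y=1$ and $\{e_1,\bar e_1\}$ for $y=2$. Each reads off one bit perfectly, because every vertex lies in the $0$- or $1$-level set of both functionals. This is the classical bit, i.e.\ the square is a gbit.

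Next I assign Alice's nine encodings $\omega_{ab}$ to vertices. The aim is that $e_0$ sorts $a$ and $e_1$ sorts $b$ in the pattern required by the merit expression. Concretely, I need $e_0(\omega_{11})=e_0(\omega_{12})=1$, $e_0(\omega_{00})=e_0(\omega_{02})=0$, and the analogous conditions for $e_1$ and $b$. I then check that the bracket $[(e_0+e_k)(\omega_{11}-\omega_{00})+(-e_0+e_k)(\omega_{01}-\omega_{10})]$ equals $4$, its maximum, and that the last bracket equals $2$. Substituting into the displayed formula for $\mathcal{S}_{REC}(n)$ then gives $\frac23+\frac4{18}+\frac2{18}=1$.

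\textbf{Main obstacle.} The delicate point is consistency: the same vertex assignment must saturate both brackets at once. With two perfectly readable bits available, I expect each pair $(a,b)$ can be mapped to the vertex whose pair of facet values $(e_0,e_1)$ matches the required exclusion pattern. I would verify this by tabulating the required values for each $\omega_{ab}$ that appears and checking that none is asked to take contradictory values. If such a clash appears, the fallback is to rely on \eqref{Rec_n} together with the known result $\mathcal{S}_{RAC}(4)=1$ from \cite{heinosaari2022random} and the preceding lemma, which together give the value $1$.
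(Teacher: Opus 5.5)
Your proposal is correct, but it argues differently from the paper. The paper's justification is a one-line substitution: it takes $\mathcal{S}_{RAC}(4)=1$ from \cite{heinosaari2022random} and plugs it into \eqref{Rec_n}. Its content therefore rests on the cited RAC result, and on the claim implicit in \eqref{Rec_n} that the RAC bracket and the REC-specific bracket can be saturated by the same pair of measurements. You instead write down the optimal strategy in the square explicitly. That makes the observation self-contained and checks the compatibility claim directly for $n=4$. The paper's route is shorter and uniform in $n$; yours costs only a small table.

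The ``main obstacle'' you flag does not actually arise. The two brackets involve disjoint encoding states, and each constraint concerns only the pair $(e_0(\omega_{ab}),e_1(\omega_{ab}))$ of a single state. The vertices realize all four patterns: $(1,0),(1,1),(0,1),(0,0)$ for $\omega_0,\omega_1,\omega_2,\omega_3$ respectively. So you can take $e_k=e_1$ and
$\omega_{00}=\omega_3$, $\omega_{01}=\omega_2$, $\omega_{10}=\omega_0$, $\omega_{11}=\omega_1$, $\omega_{02}=\omega_3$, $\omega_{12}=\omega_0$, $\omega_{20}=\omega_0$, $\omega_{21}=\omega_1$, with $\omega_{22}$ arbitrary. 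This gives the values $4$ and $2$ for the two brackets, so $\mathcal{S}_{REC}(4)=\frac23+\frac{4}{18}+\frac{2}{18}=1$. In fact every success probability $p(z\neq x_y|x,y)$ equals $1$, so the fallback is never needed.

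One wording slip: the square is not ``the classical bit'' (that would be a segment). It is the gbit, and what you use is that its two facet measurements are deterministic on every vertex, with all four joint outcomes occurring.
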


\begin{widetext}

\begin{figure*}[t]
    \centering
    \begin{subfigure}[b]{0.48\textwidth}
        \centering
        \includegraphics[width=\linewidth]{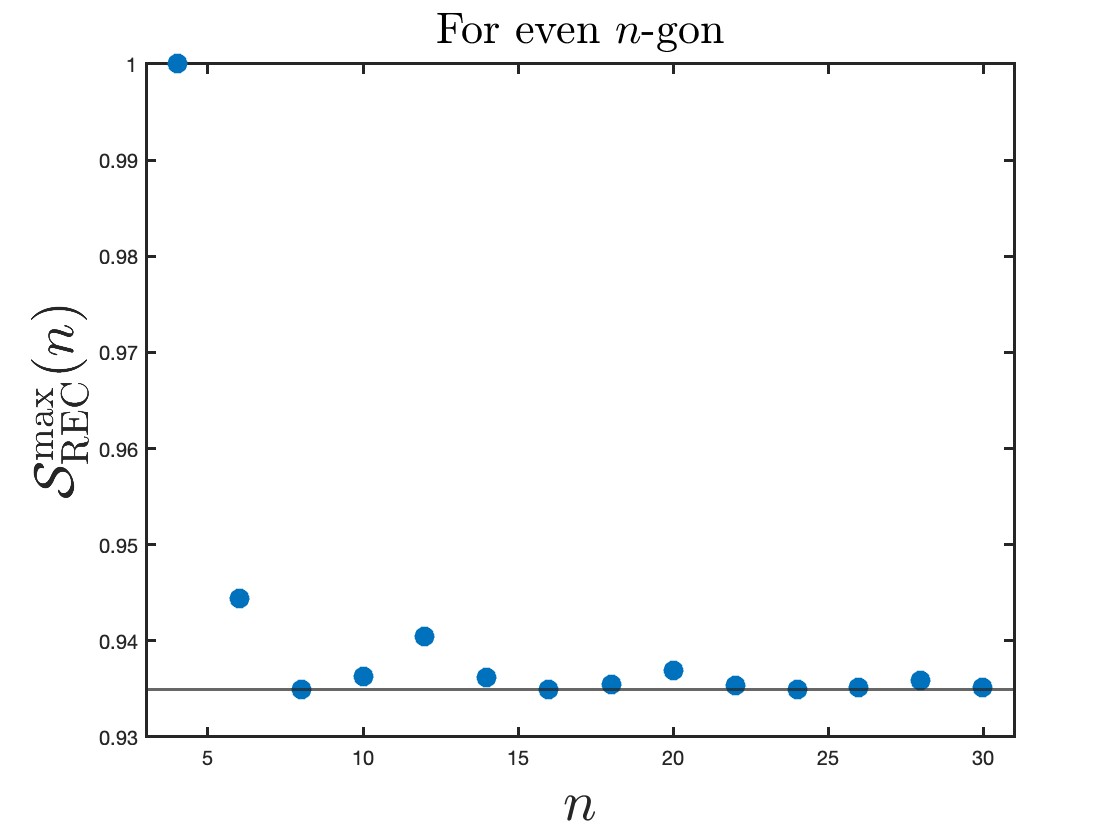}
    \end{subfigure}
    \hfill
    \begin{subfigure}[b]{0.48\textwidth}
        \centering
        \includegraphics[width=\linewidth]{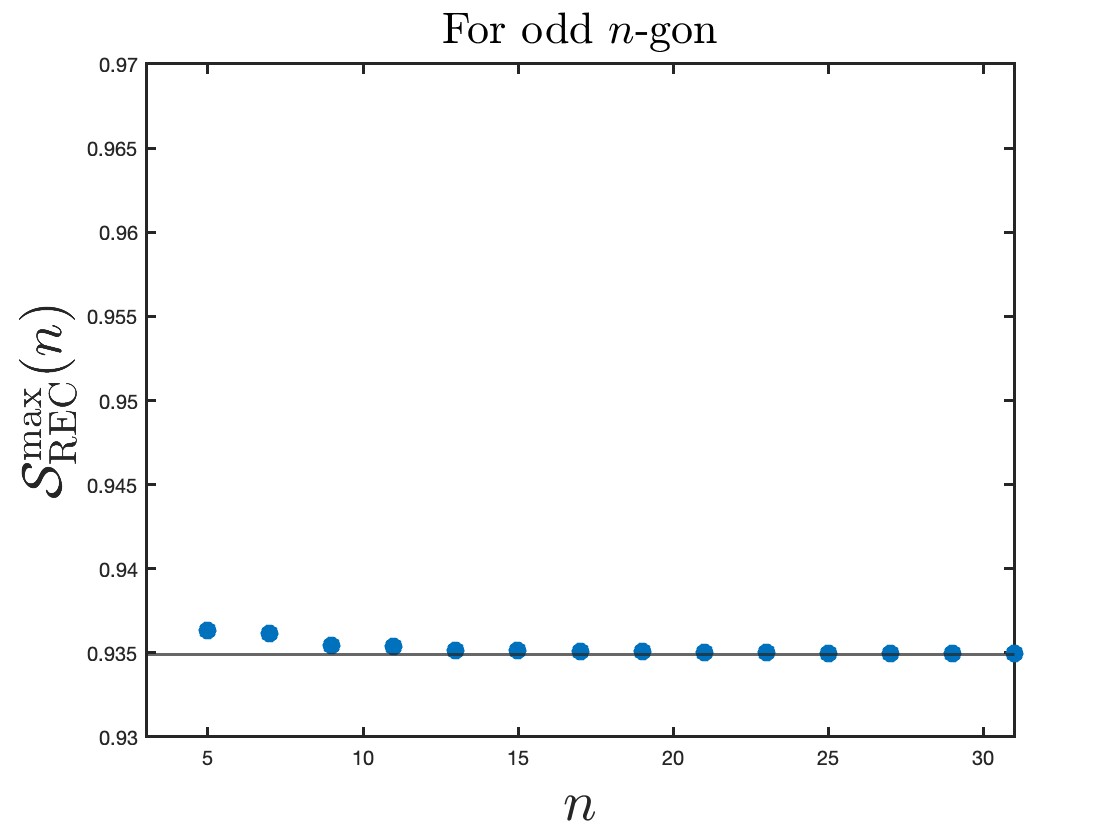}
    \end{subfigure}
    \caption{The optimal success merit of REC for even (left) and odd $n$-gon (right) under dichotomic measurement implemented by Bob. The constant black line corresponds to the optimal success merit of the task using qubit states and measurements. At lower $n$, the success merit has a greater value. Note that the nature of the convergence of the merit is different for even and odd $n$-gon.}
    \label{fig:comparison}
\end{figure*}
\begin{table}[h!]
\centering
\caption{Main results in the polygonal model.}
\label{tab:polygon_results}
\begin{tabular}{lcc}
\hline
\textbf{Property} & \textbf{Even $n$-gon} & \textbf{Odd $n$-gon} \\
\hline
One extremal effect excludes
& $2$ states & $2$ states \\

Separation of excluded states
& $2\pi/n$ & $2\pi/n$ \\

All pure states strongly antidistinguishable
& Yes & Yes \\

Equally spaced $k$-set strongly antidistinguishable
& $k\mid n,\ k\geq 3$ & $k\mid n,\ k\geq 3$ \\

Equally antidistinguishable equally spaced set
& Trine & Trine \\

$n\to\infty$ limit
& Qubit disk & Qubit disk \\
\hline
\end{tabular}
\end{table}
\end{widetext}

\section{Conclusion}
In this work, we developed a systematic framework for studying antidistinguishability in general probabilistic theories. We introduced antidistinguishability, strong antidistinguishability, and equal antidistinguishability, and established general relations among these notions. In particular, we showed that an equally antidistinguishable set of $k$ states in a GPT of affine dimension $d$ must satisfy $k\leq d+1$.
We then investigated antidistinguishability in polygonal theories. The table \ref{tab:polygon_results} summarizes the main results of the corresponding section.
Finally, we introduced Random Exclusion Codes as a communication task for probing nonclassicality in polygonal theories. Our results show that polygon models can outperform the qubit value, with the largest advantages occurring for smaller $n$, while the optimal success merit converges to the qubit value as $n$ increases. The optimal succes merit is considered with dichotomic decoding measurement. For quantum merit, this kind of measurement is proven to be sufficient, for general GPTs, this question remains open.

These results provide a unified perspective on exclusion phenomena in GPTs and motivate several directions for future research. A natural extension is to investigate antidistinguishability beyond polygonal state spaces and in higher-dimensional GPTs. Another promising direction is the study of LOCC antidistinguishability of multipartite states in GPTs. While this question has recently been explored for quantum states \cite{manna20}, its formulation and characterization in general probabilistic theories remain unexplored. It would also be interesting to investigate other information-theoretic tasks based on state exclusion and to examine how their performance varies across different physical theories.

\bibliography{ref}
\end{document}